\documentclass[11pt]{article}

\usepackage{url}
\usepackage{fullpage}
\usepackage{color}
\usepackage{graphicx}
\usepackage{amssymb}
\usepackage{amsfonts}
\usepackage{amsmath}
\usepackage{amsthm}

\usepackage{algorithm}
\usepackage{algpseudocode}

 \newtheorem{theorem}{Theorem}[section]
\newtheorem{lemma}[theorem]{Lemma}

\newtheorem{proposition}[theorem]{Proposition}

\theoremstyle{definition}

\theoremstyle{remark}

\newcommand{\IGNORE}[1]{}

\renewcommand{\H}{H}
\newcommand{\halo}{\mathrm{Halo}}
\newcommand{\cost}{c}

\renewcommand{\setminus}{-}
\newcommand{\opt}{opt}

\newcommand{\inv}[1]{#1^*} 
\newcommand{\Nbr}[1]{\Gamma(#1)}

\newcommand{\hC}{\widehat{C}}
\newcommand{\hG}{\widehat{G}}
\newcommand{\hE}{\widehat{E}}

\newcommand{\hr}{\widehat{r}}

\newcommand{\calA}{{\cal A}}

\newcommand{\setT}{T}
\newcommand{\setS}{S}
\newcommand{\setR}{R}
\newcommand{\nterms}{|T|}

\pagestyle{plain}

\begin{document}

\title{An improved approximation algorithm for the minimum cost subset
  $k$-connected subgraph problem}
\author{Bundit Laekhanukit\thanks{
School of Computer Science, McGill University. 
\hbox{Email: blaekh@cs.mcgill.ca}.}
\thanks{
This research was supported by Natural Sciences and Engineering
Research Council of Canada (NSERC) grant no.~288334, 429598 and by
Harold H Helm fellowship.}\thanks{
Most of the work was done while the author was in Department of
Combinatorics and Optimization, University of Waterloo, Canada.}
}


\maketitle







\begin{abstract}
  The minimum cost subset $k$-connected subgraph problem is a
  cornerstone problem in the area of network design with vertex
  connectivity requirements. In this problem, we are given a graph
  $G=(V,E)$ with costs on edges and a set of terminals $T$. The goal 
  is to find a minimum cost subgraph such that every pair of terminals 
  are connected by $k$ openly (vertex) disjoint paths. 
  In this paper, we present an approximation algorithm for the subset
  $k$-connected subgraph problem which improves on the previous best
  approximation guarantee of $O(k^2\log{k})$ by Nutov (FOCS 2009). 
  Our approximation guarantee, $\alpha(\nterms)$, depends upon the
  number of terminals: 
\[
\alpha(\nterms) \ \ =\ \  
\begin{cases} 
O(k \log^2 k) & \mbox{if } 2k\le \nterms < k^2\\ 
O(k \log k) & \mbox{if } \nterms \ge k^2
\end{cases}
\]
  So, when the number of terminals is {\em large enough}, the  
  approximation guarantee improves significantly.
  Moreover, we show that, given an approximation algorithm for
  $\nterms=k$, we can obtain almost the same approximation guarantee
  for any instances with $\nterms> k$.
  This suggests that the hardest instances of the problem are when
  $\nterms\approx k$.

\end{abstract}




\section{Introduction}
\label{sec:intro}

We present an improved approximation algorithm for the {\em minimum
  cost subset $k$-connected subgraph problem}. In this problem (subset
$k$-connectivity, for short), we are given a graph $G=(V,E)$ with edge
costs and a set of terminals $\setT\subseteq V$. The goal is to find a 
minimum cost subgraph such that each pair of terminals is connected by
$k$ openly (vertex) disjoint paths.
This is a fundamental problem in network design which includes as
special cases the minimum cost Steiner tree problem (the case $k=1$)
and the minimum cost $k$ vertex-connected spanning subgraph problem
(the case $\setT=V$).
However, the subset $k$-connectivity problem is significantly harder
than these two special cases.
Specifically, an important result of Kortsarz, Krauthgamer and
Lee~\cite{KKL04} shows that the problem does not admit an
approximation guarantee better than $2^{\log^{1-\epsilon}{n}}$ for
any $\epsilon>0$ unless $\mathrm{NP}\subseteq
\mathrm{DTIME}(n^{O(\mathrm{polylog}(n))})$. 
In contrast, polylogarithmic approximation guarantees are known
for the minimum cost $k$-vertex connected spanning subgraph problem. 
The first such result was obtained by Fakcharoenphol and Laekhanukit
\cite{FL08} using 
the {\em Halo-set decomposition}, introduced by Kortsarz and
Nutov~\cite{KN05}. 
Subsequently, Nutov~\cite{Nutov09a} improved the approximation
guarantee to $O\left(\log k \cdot \log\frac{n}{n-k}\right)$.

Since the hardness result of Kortsarz et~at.~\cite{KKL04} no
non-trivial approximation algorithm was known for the general case of
the subset-$k$-connectivity problem until the work of Chakraborty,
Chuzhoy and Khanna~\cite{CCK08}.
They presented an $O(k^{O(k^2)}\cdot\log^4\nterms)$-approximation
algorithm for the rooted version of our problem, namely the 
{\em rooted subset $k$-connectivity problem}.
There, given a root vertex $r$ and a set of terminals $T$, 
the goal is to find a minimum cost subgraph that has $k$ openly
disjoint paths from the root vertex $r$ to every terminal in
$T$. Chakraborty et~al. showed how to solve the subset $k$-connectivity
problem by applying the rooted subset $k$-connectivity algorithm $k$
times, thus obtaining an 
$O(k^{O(k^2)}\cdot\log^4\nterms)$-approximation algorithm. 
Recently, in a series of developments 
\cite{CCK08,CK08,CK09,Nutov12,Nutov09c}, 
the approximation guarantees for the rooted subset $k$-connectivity
problem has been steadily improved. This has culminated in an
$O(k\log{k})$ guarantee due to Nutov~\cite{Nutov12},  
thus implying an approximation guarantee of $O(k^2\log{k})$ for the
subset $k$-connectivity problem. 


There is a trivial way to obtain an approximation bound of
$O(\nterms^2)$. 
So, with the current progress on the rooted subset $k$-connectivity
problem, the application of the rooted subroutine is only useful when
the number of terminals is large enough, say $\nterms\ge 2k$.
The main contribution of this paper is to show that, in this case,
only a polylogarithmic number of applications of the rooted subset
$k$-connectivity algorithm are required to solve the subset
$k$-connectivity problem.
Given an approximation algorithm for the rooted subset
$k$-connectivity problem, we show that only $O(\log^2{k})$
applications of the algorithm are required, and we can save a factor
of $O(\log{k})$ since some of these applications are applied to 
instances with lower costs. 
Moreover, as the number of terminal increases above $k^2$, we are able
to save an additional $O(\log k)$ factor.  
Thus, given an approximation algorithm for the rooted subset
$k$-connectivity problem in~\cite{Nutov12} due to Nutov (and with
careful analysis), we achieve an $\alpha(\nterms)$-approximation
guarantee where 
\[
\alpha(\nterms) \ \ =\ \   
\begin{cases} 
O(k \log^2 k) &\ \ \  \mbox{if } 2k\le \nterms < k^2\\ 
O(k \log k) &\ \ \  \mbox{if } \nterms \ge k^2
\end{cases}
\]


As we may combine our algorithm with the trivial
$O(\nterms^2)$-approximation algorithm for the case $\nterms<2k$, we
obtain an approximation guarantee of $O(k^2)$, which improves upon the
previous best approximation guarantee of $O(k^2\log{k})$ for all
cases.
Moreover, for $\nterms \geq 2k$, we obtain a significant improvement of a
factor of $k$.
Observe, however, that for the case $\nterms\approx k$ the guarantee
is still quadratic.  
At first, this may seem paradoxical since we may hope that the problem
is easier when the number of terminals is small. Our results suggest
that this is not the case.
Indeed, it appears that the hardest instances of subset
$k$-connectivity may have at most $k$ terminals.
Precisely, we show that, given an $\alpha(k)$-approximation algorithm
for the subset $k$-connectivity problem with $\nterms=k$, there is
an $(\alpha(k)+f(k))$-approximation algorithm for any instance with
$\nterms>k$, where $f(k)$ is the best known approximation guarantee
for the rooted subset $k$-connectivity problem.
Furthermore, we give an approximation preserving reduction from the
rooted subset $k$-connectivity problem to the subset $k$-connectivity
problem, showing a strong connection between the two problems.

\ \\
{\bf Related Work.} Some very special cases of the subset
$k$-connectivity problem are known to have constant factor approximation
algorithms.
For $k=1$, the minimum cost Steiner tree problem, the best known
approximation guarantee is $1.39$ due to
Byrka, Grandoni, Rothvo{\ss} and Sanit{\`a}~\cite{BGRS10}.
For $k=2$, a factor two approximation algorithm was given by
Fleischer, Jain and Williamson~\cite{FJW06}.
The subset $k$-connectivity problem also has an $O(1)$-approximation
algorithm when edge costs satisfy the triangle inequality; see
Cheriyan and Vetta~\cite{CV07}.
The most general problem in this area is the vertex-connectivity
survivable network design problem (VC-SNDP). In VC-SNDP, the
connectivity requirement for each pair of vertices can be arbitrary. 
Recently, Chuzhoy and Khanna~\cite{CK09} showed that there is an
$O(k^3\log n)$-approximation algorithm for VC-SNDP.
The problems where requirements are edge and element connectivity
(EC-SNDP and Element-SNDP) are also very well studied.
Both problems admit $2$-approximation algorithms via iterative 
rounding.
For EC-SNDP, a $2$-approximation algorithm was given by
Jain~\cite{Jain01}.
For Element-SNDP a $2$-approximation algorithm was given by Fleischer,
Jain and Williamson~\cite{FJW06}.
The vertex-cost versions of these problems have also been studied in
literature.  
Nutov~\cite{Nutov10-nodecost} gave an approximation guarantee of
$O(k\log\nterms)$ for vertex-cost EC-SNDP using a technique, called
spider decomposition. Later on, in~\cite{Nutov12}, Nutov applied
the spider decomposition technique to other vertex-cost problems,
giving approximation guarantees of $O(k\log\nterms)$ for Element-SNDP, 
$O(k^2\log\nterms)$ for the rooted subset $k$-connectivity problem,
$O(k^3\log\nterms)$ for the subset $k$-connectivity
problem and $O(k^4\log^2\nterms)$ for VC-SNDP.

\section{Preliminaries and Results}
\label{sec:prelim}

We begin with some formal definitions.
Let $G=(V,E)$ denote the graph for an instance of the problem. For a
set of edges $F$, the graph $G'=(V,E\cup F)$ is denoted by $G+F$; for
a vertex $v$, the graph obtained from $G$ by removing $v$ is denoted
by $G-v$.
For any set of vertices $U\subseteq V$, let $\Nbr{U}$ denote the set of
{\em neighbors} of $U$; that is, $\Nbr{U}=\{v\in V-U: \exists (u,v)\in E, u\in U\}$.
Define a set $\inv{U}$ to be $V\setminus(U\cup \Nbr{U})$, which is the  
{\em vertex-complement} of $U$. 
For any pair of vertices $s,t\in V$, two $s,t$-paths are
{\em openly disjoint} if they have no vertices except $s$ and $t$ in
common. 
Let $\setT\subseteq V$ be a set of vertices called {\em terminals}.
Without loss of generality, assume that no two terminals of $\setT$
are adjacent in $G$. 
This assumption can be easily justified by subdividing every edge
joining two terminals; that is, if there is an edge $(s,t)$ joining
two terminals, then we replace $(s,t)$ by two new edges $(s,u)$ and 
$(u,t)$ and set costs of the new edges so that $c(s,t)=c(s,u)+c(u,t)$,
where $c(.)$ is a cost function.
%
The set of terminals $T$ is {\em $k$-connected in $G$} if 
the graph $G$ has $k$ openly disjoint $s,t$-paths between every pair
of terminals $s,t\in \setT$.
Thus, by Menger Theorem, the removal of any set of
vertices with size at most $k-1$ leaves all the remaining terminals 
in the same component of the remaining graph.
By the {\em subset connectivity} of $G$ on $\setT$, we mean the
maximum integer $\ell$ such that $\setT$ is $\ell$-connected in $G$. 
A {\em deficient set} is a subset of vertices $U\subseteq V$ such that 
both $U$ and $\inv{U}$ contain terminals of $\setT$ and
$|\Nbr{U}|<k$. 
Observe that the 
vertex-complement $\inv{U}$ is also a deficient set.
Similarly, given a designated {\em root} vertex $r$, the graph is 
{\em $k$-connected from $r$ to $\setT$} if $G$ has $k$ openly
disjoint $r,t$-paths for every terminal $t\in \setT$ 
($r$ may or may not be in $\setT$).
By the {\em rooted connectivity} of $G$ from $r$ to $\setT$, we mean
the maximum integer $\ell$ such that $G$ is $\ell$-connected from $r$
to $\setT$. 

In the {\em subset $k$-connectivity problem}, we are given a graph
$G=(V,E)$ with a cost $\cost(e)$ on each edge $e\in E$, a set of
terminals $\setT\subseteq V$, and an integer $k\ge 0$. The goal is to
find a set of edges $\hE\subseteq E$ of minimum cost such that 
$\setT$ is $k$-connected in the subgraph $\hG=(V,\hE)$. 
In the {\em rooted subset $k$-connectivity problem}, our goal is to
find a set of edges $\hE\subseteq E$ of minimum cost such that the
subgraph $\hG=(V,\hE)$ is $k$-connected from $r$ to $\setT$, for a 
given root $r$.

Nutov~\cite{Nutov12} recently gave an $O(k\log{k})$-approximation
algorithm for the rooted subset $k$-connectivity problem.
The approximation guarantee improves by a logarithmic factor for the 
problem of increasing the rooted connectivity of a graph by one.

\begin{theorem}[Nutov 2009~\cite{Nutov12}]
  \label{thm:rooted-kconn}
  There is an $O(k\log{k})$-approximation algorithm for the rooted
  subset $k$-connectivity problem. Moreover, consider the restricted
  version of the problem where the goal is to increase the rooted
  connectivity from $\ell$ to $\ell+1$. 
  Then the approximation guarantee (with respect to a standard LP) is 
  $O(\ell)$.
\end{theorem}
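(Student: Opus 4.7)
I would prove both parts by iterative augmentation, combining a biset-covering LP with Nutov's spider decomposition.

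\textbf{Augmentation step ($O(\ell)$ ratio).} For the restricted problem of lifting rooted connectivity from $\ell$ to $\ell+1$ in a current subgraph $H$, call a set $S\subseteq V$ \emph{deficient} if $S \cap \setT \neq \emptyset$, $r \in \inv{S}$, and $|\Nbr{S}| = \ell$ in $H$. By Menger's theorem, an augmenting edge set $F$ suffices iff every deficient set is covered, i.e., $F$ has an edge from $S$ to $\inv{S}$ for every such $S$. I would write the natural LP with constraints $\sum_{e \in \delta(S, \inv{S})} x_e \ge 1$ for every deficient $S$. The structural observation that drives the rounding is that the family of deficient sets is intersecting: whenever $S_1, S_2$ are deficient with $S_1 \cap S_2 \neq \emptyset$, the neighborhoods satisfy the crossing inequality $|\Nbr{S_1 \cap S_2}| + |\Nbr{S_1 \cup S_2}| \le 2\ell$. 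This lets Nutov's spider decomposition extract, from any fractional $x$, a union of spider-like substructures (each anchored on at most $\ell$ leg-vertices) that collectively cover every deficient set at total cost $O(\ell) \cdot c^\top x$.

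\textbf{Overall bound ($O(k\log k)$).} A naive summation $\sum_{\ell=0}^{k-1} O(\ell)$ only gives $O(k^2)$, so the sharper bound requires charging each augmentation phase against a single \emph{global} $k$-connectivity LP rather than its phase-local analogue. A suitably scaled global LP solution supplies a fractional cover at every level, and a harmonic-type aggregation over the $k$ phases picks up only an $H_k = O(\log k)$ factor on top of a single $O(k)$ loss (essentially because the marginal contribution of the $i$-th deficient biset in a random ordering decays like $1/i$). This replaces the arithmetic-progression sum by $O(k\log k)$.

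\textbf{Main obstacle.} The technical heart is the spider-decomposition step: one must prove that the intersecting-biset structure of deficient sets lets any fractional LP solution be rounded to an integer cover at cost $O(\ell) \cdot c^\top x$, exploiting the fact that the root $r$ plays a distinguished role in $\inv{S}$ and the deficiency is bounded by $\ell$ rather than $k$. Once this ingredient is in place, the $O(k\log k)$ aggregation is the secondary hurdle and is obtained from the global-LP charging scheme described above; assembling both pieces yields the theorem.
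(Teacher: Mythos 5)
Theorem~\ref{thm:rooted-kconn} is not proved in this paper at all: it is an external result quoted verbatim from Nutov~\cite{Nutov12} and used as a black box, so there is no internal proof to compare your argument against. Judged on its own terms, your proposal is an outline of Nutov's approach rather than a proof, and it has two genuine gaps. First, you explicitly defer ``the technical heart'' --- showing that a fractional solution can be rounded to an integral cover of all deficient bisets at cost $O(\ell)\cdot c^{\top}x$ via spider-cover decompositions --- to Nutov's machinery. That step \emph{is} the theorem's content for the augmentation version, so invoking it as an ingredient is circular. Moreover, your structural claim that the deficient family is intersecting whenever $S_1\cap S_2\neq\emptyset$ is too strong: submodularity always gives $|\Nbr{S_1\cap S_2}|+|\Nbr{S_1\cup S_2}|\le 2\ell$, but to conclude that $S_1\cap S_2$ and $S_1\cup S_2$ are again deficient you need a terminal inside $S_1\cap S_2$ (compare the hypotheses of Lemma~\ref{lmm:uncross}); when the intersection contains no terminal, Menger gives no lower bound on $|\Nbr{S_1\cap S_2}|$ and uncrossing fails. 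This failure is exactly why the general rooted problem is handled with bisets (set pairs) and ``uncrossable bifamilies'' rather than a plain intersecting set family, and why the ratio degrades to $O(\ell)$ instead of a constant.

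Second, the $O(k\log k)$ aggregation as you describe it does not hold up. Summing the phase guarantee $O(\ell)$ against a single global LP still yields $\sum_{\ell<k}O(\ell)=O(k^{2})$; scaling the global solution does not by itself create a harmonic series, and the claim that ``the marginal contribution of the $i$-th deficient biset in a random ordering decays like $1/i$'' is asserted without any mechanism and does not correspond to a known argument. The actual source of the logarithmic (rather than linear) loss in Nutov's analysis is a \emph{refined} phase guarantee in which the ratio improves as the deficient sets are forced to contain many terminals (this is precisely the $O(\ell/\phi)$ statement quoted in Lemma~\ref{lmm:root-size}, with $\phi$ the minimum number of terminals in a deficient set), so that the per-phase costs themselves form a decaying series. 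Without either a correct biset-uncrossing structure or a correct accounting of why the phases do not each cost $\Theta(\ell)$ relative to the optimum, the proposal does not establish either assertion of Theorem~\ref{thm:rooted-kconn}; for the purposes of this paper the theorem should simply be cited from~\cite{Nutov12}.
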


Our focus is upon the subset $k$-connectivity problem.
%
The followings are our main results:

\begin{theorem}
  \label{thm:sskconn-main}
  For any set $T$ of terminals, there is an
  $\alpha(\nterms)$-approximation algorithm for the subset
  $k$-connectivity problem where
\[
\alpha(\nterms) \ \ =\ \   
\begin{cases} 
O(\nterms^2) &\ \ \ \mbox{if } \nterms < 2k 
                    \quad\mbox{(folklore)} \\ 
O(k \log^2 k) &\ \ \  \mbox{if } 2k\le \nterms < k^2\\ 
O(k \log k) &\ \ \  \mbox{if } \nterms \ge k^2
\end{cases}
\]
In particular, there is an $O(k^2)$-approximation algorithm for the
general case of the subset $k$-connectivity problem, and there is an
$O(k\log{k})$-approximation algorithm when $\nterms \ge k^2$.
\end{theorem}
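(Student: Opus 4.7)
I treat Theorem \ref{thm:rooted-kconn} as a black-box subroutine and, when $\nterms \ge 2k$, invoke it only polylogarithmically often rather than the naive $k$ times used previously.

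The case $\nterms < 2k$ is folklore: for each pair $s,t \in T$, compute a minimum-cost subgraph containing $k$ openly disjoint $s,t$-paths (exactly, via min-cost flow since only two terminals matter) and return the union of the resulting $\binom{\nterms}{2}$ edge sets; each piece costs at most $\opt$, yielding an $O(\nterms^2)$-approximation.

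For the main case $2k \le \nterms < k^2$ I plan a two-level augmentation. The outer loop runs $O(\log k)$ phases, phase $i$ raising the current subset connectivity from $2^i$ to $2^{i+1}$. Within each phase I pick $O(\log k)$ root terminals from $T$ and invoke Theorem \ref{thm:rooted-kconn} for each, outputting the union of all edges produced. For correctness, any deficient set $U$ (with $|\Nbr{U}| < k$ and both $U \cap T$, $\inv{U} \cap T$ nonempty) is \emph{resolved} by any root $r \in U \cup \inv{U}$: rooted $k$-connectivity from $r$ supplies $k$ openly disjoint paths to every terminal on the opposite side of $\Nbr{U}$, forcing the cut around $U$ in the output to have size $\ge k$. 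Since $|\Nbr{U}| < k \le \nterms/2$, a uniformly random terminal lies in $U \cup \inv{U}$ with probability $\ge 1/2$, so $O(\log k)$ random roots resolve every deficient set from the polynomial-size family of minimal ones (the natural tool here being the Halo-set decomposition of Kortsarz--Nutov).

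For the cost, the naive bound is $O(\log k) \cdot O(\log k) \cdot O(k \log k) \cdot \opt = O(k \log^3 k) \cdot \opt$. To shave a $\log k$ factor I will argue that the $O(\log k)$ inner rooted invocations within a single phase can be charged jointly to a single rooted LP whose optimum is bounded by $\opt$, yielding $O(k \log^2 k) \cdot \opt$ overall. When $\nterms \ge k^2$, a random terminal lies in $U \cup \inv{U}$ with probability $\ge 1 - 1/k$, so only $O(1)$ roots per phase suffice (by a Chernoff/union bound over the Halo family), collapsing the inner loop and saving a further $\log k$ factor to give $O(k \log k) \cdot \opt$.

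The hard part will be the \emph{structural sampling lemma}: bounding the family of minimal deficient sets by a polynomial in $n$ and proving that hitting each minority side with a single rooted call actually removes the set from the deficient family even after additional edges are added by other rooted calls in the same phase. The cost-charging step is also subtle, since naive per-invocation accounting loses the $\log k$ factor; one must either exhibit a single LP whose optimum bounds the total cost of an entire phase, or establish a geometric decay of residual LP costs across phases tied to the doubling of the connectivity target. Both ingredients depend essentially on the bound $|\Nbr{U}| < k$ together with the assumption $\nterms \ge 2k$, and a more delicate version of the sampling lemma drives the further savings when $\nterms \ge k^2$.
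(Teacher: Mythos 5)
Your proposal contains a genuine gap at its core: you claim that $O(\log k)$ random roots suffice because each deficient set $U$ is resolved by any root in $U\cup\inv{U}$ (true) and because it is enough to resolve "the polynomial-size family of minimal" deficient sets (false). Covering every minimal deficient set (core) does \emph{not} cover every deficient set: a deficient set $W$ containing a core $C$ is only guaranteed to be resolved if the root lies in $W\cup\inv{W}$, and a root lying in $\inv{C}$ may well lie in $W\setminus C$ or in $\Nbr{W}$. Since the family of all deficient sets can be exponential, a probability-$1/2$-per-set sampling argument needs $\Omega(n)$ roots, not $O(\log k)$, to union-bound over it. This is exactly the difficulty the paper's machinery is built to handle: one covers, per inner iteration, only the small deficient sets containing a single core (the halo-families, Lemma~\ref{lmm:cover-halo}), and the hitting condition for a whole halo-family is the stronger requirement $r\in C$ or $r\in\inv{H(C)}$ — not $r\in U\cup\inv{U}$ for each $U$ separately — so your $\ge 1/2$ sampling probability does not transfer (the terminals in $H(C)\setminus C$ and in $\Nbr{H(C)}$ fail to hit, and $H(C)$ can contain many terminals). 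The paper instead selects roots deterministically via the thickness bounds (Lemmas~\ref{lmm:halo-nbr} and~\ref{lmm:low-thickness}) in micro iterations, and then must \emph{repeat} inner iterations because deficient sets containing two or more old cores survive; termination and, crucially, cost control come from the fact that the minimum number of terminals in a surviving deficient set doubles each inner iteration (Lemmas~\ref{lmm:two-cores} and~\ref{lmm:size-defi}). Your remark that the "hard part" is showing sets stay covered after more edges are added misidentifies the obstacle — added edges never resurrect a covered set; the problem is the sets you never covered at all.

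Your cost accounting is also structured differently from what can be made to work, and the key savings step is asserted rather than proved. You double the connectivity target over $O(\log k)$ phases and call the full rooted $k$-connectivity algorithm (cost $O(k\log k)\cdot\opt$ per call), then hope to "charge jointly to a single rooted LP" to shave a $\log k$; no such charging is exhibited, and it is not clear the rooted guarantee is LP-relative in the form a doubling scheme would need. The paper increments the subset connectivity by one over $k$ outer iterations and pays only $O(\log k)$ total for the outer loop via LP scaling (Lemma~\ref{lmm:LP-scaling}), which requires the augmentation subroutine to be bounded against the standard LP; the extra $\log k$ saving comes not from joint charging but from the refined augmentation guarantee $O(\ell/\phi)$ where $\phi$ is the minimum number of terminals in a deficient set (Lemma~\ref{lmm:root-size}), combined with the doubling of $\phi$ across inner iterations, which makes the per-iteration costs decay geometrically and sum to $O(\ell\log\ell)$ per augmentation step. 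Similarly, your $\nterms\ge k^2$ improvement relies on the same flawed sampling union bound, whereas the paper gets it from the thickness bound $\frac{q\ell}{\nterms}\le 2$ after the preprocessing of Lemma~\ref{lmm:num-cores}, which lets one skip the micro iterations entirely. The folklore case $\nterms<2k$ in your write-up matches the paper.
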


\begin{proposition}
\label{prop:hardest}
Consider the subset $k$-connectivity problem.
Suppose there is an $\alpha(k)$-approximation algorithm for instances
with $\nterms=k$.
Then there is an $(\alpha(k)+f(k))$-approximation algorithm for
any instance with $\nterms>k$, where $f(k)$ is the best known
approximation guarantee for the rooted subset $k$-connectivity problem.  
\end{proposition}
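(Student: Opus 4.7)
The plan is to reduce any instance $(G,T)$ with $\nterms > k$ to one subset $k$-connectivity subproblem on a $k$-element subset of terminals, solved using the assumed $\alpha(k)$-approximation, together with one rooted subset $k$-connectivity instance on an auxiliary graph, solved using the $f(k)$-approximation from Theorem~\ref{thm:rooted-kconn}. Fix an arbitrary $T'\subseteq T$ with $|T'|=k$ and let $\OPT$ be an optimal subset $k$-connectivity solution for $(G,T)$. First, I would run the $\alpha(k)$-approximation on $(G,T')$ to obtain $H_1\subseteq E$ in which $T'$ is $k$-connected; since $T'\subseteq T$, any feasible solution for $(G,T)$ is also feasible for $(G,T')$, and hence $c(H_1)\le \alpha(k)\,c(\OPT)$.

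Second, I would introduce a fresh vertex $r$ with zero-cost edges to every $v\in T'$, declare the edges of $H_1$ to be free, and run the rooted subset $k$-connectivity algorithm on this augmented instance with root $r$ and terminal set $T$; let $H_2\subseteq E$ be the newly purchased (non-free) edges. The final output is $H_1\cup H_2$.

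The cost bound hinges on showing that the optimum of the auxiliary rooted instance is at most $c(\OPT)$. I would verify this by exhibiting $\OPT\cup\{(r,v):v\in T'\}$ as a feasible rooted solution: for any $t\in T$ and any $S\subseteq V$ of size less than $k$ with $t\notin S$, the set $T'\setminus S$ is nonempty, so pick $v\in T'\setminus S$; either $v=t$ and the edge $(r,t)$ connects $r$ to $t$ in the graph minus $S$, or $v\ne t$ and then $v,t\in T$ remain connected in $(V,\OPT)-S$ by the $k$-connectivity of $T$. Menger's theorem then supplies $k$ openly disjoint $r,t$-paths, and the cost of this feasible solution equals $c(\OPT)$ because the auxiliary edges are free. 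Consequently $c(H_2)\le f(k)\,c(\OPT)$, giving a total cost of $(\alpha(k)+f(k))\,c(\OPT)$.

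For feasibility, take any $s,t\in T$ and any cut $S\subseteq V-\{s,t\}$ of size less than $k$: the $k$ openly disjoint $r,t$-paths in the rooted solution share no internal vertex, so $S$ destroys at most $|S|<k$ of them; any surviving path begins with an edge $(r,v_t)$ with $v_t\in T'\setminus S$ and reaches $t$ within $V-S$, so $t$ is connected in $(H_1\cup H_2)-S$ to some $v_t\in T'\setminus S$; symmetrically $s$ is connected to some $v_s\in T'\setminus S$. When $v_s\ne v_t$, the $k$-connectivity of $T'$ in $H_1$ yields a $v_s,v_t$-path in $H_1-S$, and concatenation produces the required $s,t$-path. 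The main obstacle is crystallising the auxiliary root construction: forcing $r$ to have exactly $k$ neighbours, all in $T'$, is what makes ``$k$ openly disjoint $r,t$-paths'' precisely equivalent to ``$t$ reaches $T'$ through $k$ openly disjoint paths'', which is in turn exactly what glues $H_1$ and $H_2$ into a subgraph in which all of $T$ is $k$-connected.
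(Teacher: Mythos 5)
Your proposal is correct and follows essentially the same route as the paper: pick an arbitrary set of $k$ terminals, run the assumed $\alpha(k)$-approximation on it, then attach a zero-cost root to those terminals and run the rooted subset $k$-connectivity algorithm with terminal set $T$, arguing feasibility by routing $s$ and $t$ through the chosen terminal set and cost by exhibiting the padded optimum as at most $\opt$. The only cosmetic differences (making $H_1$ free in the rooted instance, and spelling out the Menger argument the paper delegates to its root-padding remark) do not change the substance.
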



\begin{theorem}
\label{thm:rooted-to-subset}
There is an approximation preserving reduction such that, given an
instance of the rooted subset $k$-connectivity problem consisting of a
graph $G$, a root vertex $r$ and a set of terminals $T$, outputs an
instance of the subset $k$-connectivity problem consisting of a graph
$G'$ and a set of terminals $T\cup\{r\}$. 
\end{theorem}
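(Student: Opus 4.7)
My plan is to construct the subset instance $(G', T \cup \{r\})$ from the rooted instance $(G, r, T)$ by augmenting $G$ with zero-cost auxiliary gadgets that provide pairwise $T$-connectivity. The natural gadget is: for every pair $\{t_i, t_j\} \subseteq T$ and every $\ell \in \{1, \dots, k-1\}$, add a fresh degree-two auxiliary vertex $u^{\ell}_{ij}$ with two zero-cost edges, one to $t_i$ and one to $t_j$. The subset instance is $(G', T \cup \{r\})$, as required by the theorem statement.

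The forward direction is easy. Given a rooted-feasible $E^* \subseteq E(G)$, set $H^* = E^* \cup \{\text{all auxiliary edges}\}$; this has cost $c(H^*) = c(E^*)$ since auxiliary edges are free. The $(r, t)$ pairs inherit $k$ openly disjoint paths directly from $E^*$. For each pair $(t_i, t_j)$, we obtain $k-1$ length-two auxiliary paths $t_i - u^{\ell}_{ij} - t_j$, together with at least one additional $t_i$-to-$t_j$ path in $E^*$ (which exists since both $t_i$ and $t_j$ are $k$-connected to $r$ in $E^*$, so they lie in the same component). This extra path lies entirely in $V(G)$ and is therefore internally disjoint from the auxiliary vertices, giving $k$ openly disjoint $t_i$-$t_j$ paths in total.

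The backward direction is the technical heart of the proof. Given a subset-feasible $H \subseteq E(G')$, set $E' = H \cap E(G)$, whose cost equals $c(H)$ since auxiliary edges are free. The key claim is that $E'$ is rooted-feasible. The plan is a Menger-style argument: for any $t \in T$, I aim to exhibit $k$ openly disjoint $r$-$t$ paths in $E'$. Starting from the $k$ openly disjoint $r$-$t$ paths guaranteed by subset feasibility of $H$, I would rewrite each path to remove auxiliary-vertex detours, replacing each segment of the form $t_i - u^{\ell}_{ij} - t_j$ by a real-edge $t_i$-to-$t_j$ path provided by the subset $k$-connectivity of the pair $(t_i, t_j)$ in $H$.

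The main obstacle is exactly this substitution step: the real-edge replacement paths for the pair $(t_i, t_j)$ may themselves be guaranteed only by $H$ (possibly using further auxiliary shortcuts), so the substitution can chain through many intermediate terminals, and one must ensure that the final $k$ paths in $E'$ are pairwise openly disjoint. I expect to formalize this via an inductive/flow argument that uses subset feasibility across all pairs in $T \cup \{r\}$ (not just those involving $r$), together with a potential function such as the set of terminals reachable from $r$ in the current $E'$-restriction, to guarantee termination. The crux is that this potential strictly progresses whenever an auxiliary shortcut is invoked, and is bounded by $|T|$, so the procedure must terminate with $t$ itself reached via $k$ openly disjoint real-edge paths.
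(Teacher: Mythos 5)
Your reduction is not sound: the backward direction, which you yourself identify as the technical heart, is in fact false, so no inductive or flow argument can complete it. The free terminal--terminal gadgets do not merely certify connectivity that a rooted-feasible solution already has; they add genuinely new connectivity that lets a subset-feasible solution avoid buying real edges that every rooted-feasible solution needs. Concrete counterexample with $k=2$: let $G$ have vertices $r,x,t_1,\dots,t_m$, zero-cost edges $(r,x)$ and $(x,t_i)$ for all $i$, and cost-$100$ edges $(r,t_i)$ for all $i$; let $T=\{t_1,\dots,t_m\}$. Since each $t_i$ has only the two neighbors $x$ and $r$, any rooted $2$-connected solution must contain every edge $(r,t_i)$, so the rooted optimum costs $100m$. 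In your instance $G'$ (with one auxiliary vertex $u_{ij}$ per pair), the edge set $H=\{(r,x)\}\cup\{(x,t_i):i\le m\}\cup\{(r,t_1)\}\cup\{\mbox{all auxiliary edges}\}$ costs $100$ and is subset $2$-connected on $T\cup\{r\}$: the pair $(r,t_1)$ uses $\langle r,t_1\rangle$ and $\langle r,x,t_1\rangle$; the pair $(r,t_i)$, $i\ge 2$, uses $\langle r,x,t_i\rangle$ and $\langle r,t_1,u_{1i},t_i\rangle$; and each pair $(t_i,t_j)$ uses $\langle t_i,u_{ij},t_j\rangle$ and $\langle t_i,x,t_j\rangle$. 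Thus $E'=H\cap E(G)$ is not rooted-feasible ($x$ separates $r$ from $t_2,\dots,t_m$), and the gap between the two optima is a factor of $m$, which is unbounded; hence the reduction is not approximation preserving, and your proposed path-rewriting procedure is attempting to prove a false statement.

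The paper's construction avoids exactly this trap by placing the zero-cost gadget at the root rather than between terminals: the root $r$ is replaced by a clique $K_{d+1}$ on $\{r',v_1',\dots,v_d'\}$ (where $v_1,\dots,v_d$ are the neighbors of $r$), all clique edges have cost zero, and each $v_i'$ is attached to $v_i$ by an edge whose cost equals that of $(r,v_i)$; the terminal set is $T\cup\{r'\}$. Two features are essential and are missing from your gadget. First, the zero-cost edges are confined inside the clique, so the clique creates no new zero-cost connectivity among the original terminals, and every subset-feasible solution, restricted to the edges corresponding to $E(G)$, maps back to a rooted-feasible solution of the same cost (a path leaving the clique at $v_i'$ corresponds to a path starting with the edge $(r,v_i)$). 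Second, the clique has at least $k+1$ vertices, so no $(k-1)$-cut can delete it entirely; this is what makes the terminal--terminal requirements come for free in the completeness direction, playing the role that your auxiliary two-paths were meant to play, but without weakening the instance. If you want to repair your approach, the fix is essentially to abandon pairwise gadgets between terminals and adopt a root-side gadget of this kind.
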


The hardness result in Theorem~\ref{thm:rooted-to-subset} together
with the hardness of the rooted subset $k$-connectivity problem by
Cheriyan, Laekhanukit, Naves and Vetta~\cite{CLNV12} implies the
hardness of $\Omega(k^\epsilon)$, for the subset $k$-connectivity
problem, where $\epsilon>0$ is some fixed constant.

Some results and proofs similar to the ones in this paper have
appeared in previous literature; see~\cite{CVV03,KN05,Bundit-Thesis}.  
In particular, Lemma~\ref{lmm:low-thickness} and
Lemma~\ref{lmm:two-cores} appeared in~\cite{KN05}
and~\cite{Bundit-Thesis}, respectively. 
The proofs of Proposition~\ref{prop:hardest} is identical to that of
the case $\setT=V$, which was given in \cite{KR96} and also in
\cite{ADNP99}.
Our key new contributions are
Lemmas~\ref{lmm:bound-1},~\ref{lmm:halo-nbr} and~\ref{lmm:num-cores},
which allow us to extend the result in~\cite{KN05} to the subset
$k$-connectivity problem.

We remark that, at the time this paper is written, the approximation
guarantee of the subset $k$-connectivity problem was improved by
Nutov~\cite{Nutov11} to $O(k\log{k})$ for all $k\leq |T|-o(|T|)$.

\bigskip

\noindent{\bf Organization: }
In Section~\ref{sec:algo}, we present an approximation algorithm for
the subset $k$-connectivity problem, which is the main result in this
paper. 
In Section~\ref{sec:below2k}, we give a discussion that our algorithm
and analysis can be extended to the case $k<\nterms< 2k$. 
To keep the presentation simple, Section~\ref{sec:below2k} is
presented separately from the main result.
%
%
In Section~\ref{sec:hardness}, we discuss the hardness of the subset
$k$-connectivity problem. To be precise, we show that the hardest
instance of the subset $k$-connectivity problem might be when
$|T|\approx{k}$, and we give an approximation preserving reduction
from the rooted subset $k$-connectivity problem to the subset
$k$-connectivity problem.
\\

\section{An approximation algorithm}
\label{sec:algo}

Our main result in Theorem \ref{thm:sskconn-main} breaks up into three
cases where there are a small number, a moderate number and a large
number of terminals, respectively. 
Indeed, the first case is a folklore. 
When there are a small number of terminals ($\nterms<2k$), we apply
the following trivial $O(\nterms^2)$-approximation algorithm. We find
$k$ openly disjoint paths of minimum cost between every pair of
terminals by applying a minimum cost flow algorithm. Let $\opt$
denote the cost of the optimal solution to the subset $k$-connectivity
problem. Since any feasible solution to the subset $k$-connectivity
problem has $k$ openly disjoint paths between every pair of terminals, 
the cost incurred by finding a minimum cost collection of $k$
openly disjoint paths between any pair of terminals is at most
$\opt$. Since we have at most $\nterms^2$ pairs, this incurs a total
cost of $O(\nterms^2\cdot\opt)$. 

The remaining two cases are similar. Things are slightly easier,
though, when there are large number of terminals ($\nterms \ge k^2$),
leading to a slightly better guarantee than when there are a moderate
number of terminals ($2k \le \nterms$). We devote most of this section to
presenting an approximation algorithm for the moderate case. (In
Section~\ref{sec:very-simple-algo}, we show the improvement
for the case of a large number of terminals.)


Our algorithm works by repeatedly increasing the subset connectivity
of a graph by one. 
We start with a graph that has no edges.
Then we apply $k$ {\em outer} iterations. 
Each outer iteration increases the subset
connectivity (of the current graph) by one by adding a set of edges of
approximately minimum cost. The analysis of the outer iterations
applies linear programming (LP) scaling and incurs a factor of
$O(\log{k})$ in the approximation guarantee for the $k$ outer
iterations.
%
The analysis based on LP-scaling can be seen
in~\cite{RW97,CV07,KN05,FL08} and also
in~\cite{GGPSTW94,Bundit-Thesis}. 

The following is a standard LP-relaxation for the subset
$k$-connectivity problem. 
\[
\begin{array}{lll}
\min & \displaystyle\sum_{e\in E}c_ex_e \\ 
\mbox{s.t.} 
  &
    \displaystyle\sum_{e\in\delta(U,W)}x_e \geq k - |V\setminus(U\cup W)|  
    & \forall (U,W)\in\mathcal{S}\\
  & 0 \leq x_e \leq 1 & \forall e\in E
\end{array}
\]
where $\delta(U,W)=\{(u,w)\in E:u\in U, w\in W\}$ is a set of edges
with one endpoint in $U$ and the other endpoint in $W$, and 
$\mathcal{S}=\{(U,W)\in V\times V: U\cap W=\emptyset,
U\cap{T}\neq\emptyset, W\cap{T}\neq\emptyset\}$.

\begin{lemma}
Suppose there is a $\beta(\ell)$-approximation algorithm for the
problem of increasing the subset connectivity of a graph from $\ell$
to $\ell+1$ with respect to a standard LP, where $\beta(\ell)$ is a
non-decreasing function.Then there is an $O(\beta(k)\log
k)$-approximation algorithm for the subset $k$-connectivity problem.
\label{lmm:LP-scaling}
\end{lemma}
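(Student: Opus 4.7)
The plan is a standard LP-scaling argument that builds the final subgraph in $k$ outer phases, where phase $i$ raises the subset connectivity from $i-1$ to $i$ by invoking the hypothesized $\beta$-approximation. I would first solve the LP displayed above to obtain an optimal fractional solution $x^*$ of value $\LP\le\opt$, and then construct $H_i=H_{i-1}+F_i$ starting from $H_0=(V,\emptyset)$, where $F_i$ is returned by the $\beta(i)$-approximation for phase $i$. Success hinges on showing that although $x^*$ has to pay for full $k$-connectivity, its ``residual mass'' on the phase-$i$ cuts is still large enough to price out the augmentation cheaply.

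The heart of the argument will be to bound $c(F_i)$ by $\beta(i)\cdot\LP/(k-i+1)$. For this I plan to show that the scaled vector $y^{(i)} := x^*/(k-i+1)$ is feasible for the ``standard LP'' of the phase-$i$ augmentation problem (the covering LP asking $\sum_{e\in\delta(U,W)\setminus E(H_{i-1})}y_e \ge 1$ for every cut $(U,W)\in\mathcal{S}$ whose augmentation demand is positive). Fix such a cut. Its demand being positive means $|E(H_{i-1})\cap\delta(U,W)|+|V\setminus(U\cup W)|<i$, while the $(i-1)$-connectivity of $H_{i-1}$ (via a Menger-style cut argument) forces the same quantity to be $\ge i-1$; hence it equals exactly $i-1$. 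Combining with the feasibility of $x^*$, namely $\sum_{e\in\delta(U,W)}x^*_e\ge k-|V\setminus(U\cup W)|$, and subtracting gives
\[
\sum_{e\in\delta(U,W)\setminus E(H_{i-1})} x^*_e \;\ge\; \bigl(k-|V\setminus(U\cup W)|\bigr)-\bigl((i-1)-|V\setminus(U\cup W)|\bigr) \;=\; k-i+1,
\]
which is exactly what the scaling requires. Thus the phase-$i$ augmentation LP has value at most $\LP/(k-i+1)$, and invoking the $\beta(i)$-approximation yields $c(F_i)\le\beta(i)\cdot\LP/(k-i+1)$.

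Finally, summing over the $k$ phases and using the monotonicity of $\beta$ together with the harmonic bound,
\[
\sum_{i=1}^{k} c(F_i) \;\le\; \sum_{i=1}^{k} \beta(i)\cdot\frac{\LP}{k-i+1} \;\le\; \beta(k)\cdot\LP\cdot\sum_{j=1}^{k}\frac{1}{j} \;=\; O(\beta(k)\log k)\cdot\opt,
\]
which proves the lemma. The only delicate step is the LP-feasibility verification in phase $i$; the rest is routine bookkeeping, and I do not anticipate any further obstacle.
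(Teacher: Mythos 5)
Your proof is correct and is exactly the standard LP-scaling argument that the paper invokes for Lemma~\ref{lmm:LP-scaling} (it gives no proof, citing the literature): scale the optimal fractional solution by $1/(k-i+1)$, verify feasibility for the phase-$i$ augmentation LP via the Menger/setpair bound $|\delta_{H_{i-1}}(U,W)|+|V\setminus(U\cup W)|=i-1$ together with $x_e\le 1$, and sum the harmonic series. No gaps; this matches the intended argument.
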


We are left with the key problem of increasing the subset connectivity
(of the current graph) by one by adding a set of edges of
approximately minimum cost. Throughout this section, we assume that
the set of terminals $T$ is $\ell$-connected in the current graph, and 
$\nterms\geq 2k \geq 2\ell$. 
Also, we assume that no two terminals are adjacent in the input graph
$G=(V,E)$.
 \\

\noindent{\bf Assumption:} 
The set of terminals $T$ is $\ell$-connected in the current graph
$\hG=(V,\hE)$. 
Moreover, no two terminals are adjacent in the input graph $G$.
\\ 

Our algorithm solves the problem of increasing the subset connectivity
of a graph by one by applying a number of so-called {\em inner}
iterations.
To describe our algorithm, we need some definitions and subroutines.
Thus, we defer the description of our algorithm to
Section~\ref{sec:second-algo}.
In Section~\ref{sec:deficient}, we give important definitions and
structures of subset $\ell$-connected graphs called ``cores'' and
``halo-families''. 
Our algorithm requires two subroutines. 
The first one is the subroutine that employs the rooted subset
$(\ell+1)$-connectivity algorithm to cover halo-families. This
subroutine is given in Section~\ref{sec:cover-halo}. 
The second one is the subroutine for decreasing the number of cores to
$O(\ell)$, which is given in Section~\ref{sec:reduce-cores}.
Then we introduce a notion of ``thickness'' in
Section~\ref{sec:thickness}.  
This notion guides us how to use the rooted subset
$(\ell+1)$-connectivity algorithm efficiently. 
Finally, in Section~\ref{sec:second-algo}, we present an
$O(k\log^2{k})$-approximation algorithm for the case $\nterms\ge2k$. 
By slightly modifying the algorithm and analysis, we show in
Section~\ref{sec:very-simple-algo} that our algorithm achieves a
better approximation guarantee of $O(k\log{k})$ when $\nterms\ge
k^2$.


\subsection{Subset $\ell$-connected graphs:
	deficient sets, cores, halo-families and halo-sets}
\label{sec:deficient}
\label{sec:core-halo}

In this section, we discuss some key properties of deficient sets
that will be exploited by our approximation algorithm. 

Assume that the set of terminals $\setT$ is $\ell$-connected
in the graph $G=(V,E)$.
Then $G$ has $|\Nbr{U}|\geq\ell$ for
all $U\subseteq V$ such that $U\cap \setT\neq\emptyset$
and $\inv{U}\cap \setT\neq\emptyset$. 
Moreover, by Menger Theorem,
$G$ is subset $(\ell+1)$-connected
if and only if $G$ has no deficient set.

A key property of vertex neighborhoods is that
the function $|\Nbr{\cdot}|$ on subsets of $V$ is submodular.
In other words,
for any subsets of vertices $U,W\subseteq V$, 
\[
|\Nbr{U\cup W}|+|\Nbr{U\cap W}| \leq |\Nbr{U}| + |\Nbr{W}|.
\]

We call a deficient set $U\subseteq V$ {\em small}\footnote{
There is another way to define a {\em small} deficient set. For
example, in~\cite{Nutov11}, Nutov defined a small deficient set as a
deficient set $U$ such that $|U\cap\setT|\leq\frac{\nterms-\ell}{2}$. 
}
if $|U\cap\setT| \le |\inv{U}\cap\setT|$.

\begin{proposition}
For any small deficient set $U$, $|U\cap \setT|\leq \nterms/2$ and
$|\inv{U}\cap \setT|\geq (\nterms-\ell)/2$.
\label{prop:size}
\end{proposition}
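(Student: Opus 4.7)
The plan is to exploit the tripartition $V = U\cup \Nbr{U}\cup \inv{U}$, which is a disjoint union by the very definition $\inv{U} = V\setminus(U\cup\Nbr{U})$. Intersecting with $\setT$ gives the identity
\[
\nterms \ = \ |U\cap \setT| + |\Nbr{U}\cap \setT| + |\inv{U}\cap \setT|,
\]
and both claimed inequalities will fall out of this identity combined with the two defining properties of a small deficient set.

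For the first bound, I would simply drop the nonnegative middle term and apply the smallness hypothesis $|U\cap \setT|\leq |\inv{U}\cap \setT|$ to obtain $2|U\cap \setT|\leq |U\cap \setT| + |\inv{U}\cap \setT|\leq \nterms$, i.e., $|U\cap \setT|\leq \nterms/2$.

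For the second bound, the one extra ingredient I need is an upper bound on $|\Nbr{U}\cap \setT|$. This is supplied by the connectivity context of the present section: since the current graph $\hG$ is subset $\ell$-connected, $|\Nbr{U}|\geq\ell$ for every $U$ with terminals on both sides; on the other hand, a deficient set here (we are raising the subset connectivity from $\ell$ to $\ell+1$) satisfies $|\Nbr{U}|\leq\ell$. Hence $|\Nbr{U}|=\ell$, so in particular $|\Nbr{U}\cap \setT|\leq\ell$. Substituting into the displayed identity and applying the smallness hypothesis a second time yields $\nterms \leq 2|\inv{U}\cap \setT| + \ell$, i.e., $|\inv{U}\cap \setT|\geq (\nterms-\ell)/2$.

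I do not expect any real obstacle; the argument is purely a counting exercise on the tripartition of $V$, and uses neither the submodularity of $|\Nbr{\cdot}|$ nor the assumption that terminals form an independent set in $G$. The only point of care is aligning the notion of ``deficient'' with the contextual reading $|\Nbr{U}|\leq\ell$ appropriate to the inner iteration (rather than the general $|\Nbr{U}|<k$ stated in Section~\ref{sec:prelim}); it is this reading that delivers the clean $\ell$ appearing in the second bound.
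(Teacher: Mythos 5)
Your proof is correct and is essentially the paper's own argument: the paper likewise uses the tripartition $\nterms=|U\cap T|+|\Nbr{U}\cap T|+|\inv{U}\cap T|$ together with the smallness hypothesis, bounding $|\inv{U}\cap T|$ by the average $\frac{\nterms-|\Nbr{U}\cap T|}{2}\geq\frac{\nterms-\ell}{2}$. Your closing remark about reading ``deficient'' as $|\Nbr{U}|=\ell$ in this section is exactly the convention the paper adopts (cf.\ the Uncrossing Lemma), so there is no gap.
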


\begin{proof}
The first inequality follows from the definition of small deficient
sets. Consider the second inequality. We have
\[
|\inv{U}\cap \setT|\geq
\frac{|U\cap \setT|+|\inv{U}\cap \setT|}{2}=
\frac{\nterms-|\Nbr{U}\cap \setT|}{2} \geq \frac{\nterms-\ell}{2}
\]
\end{proof}

\begin{lemma}[Uncrossing Lemma]
\label{lmm:uncross}
Consider any two distinct deficient sets $U,W\subseteq V$. If 
$U\cap W\cap \setT\neq\emptyset$ and $\inv{U}\cap\inv{W}\cap
\setT\neq\emptyset$, then both $U\cap W$ and $U\cup W$ are
deficient sets.
Moreover, if $U$ or $W$ is a small deficient set, then
$U\cap W$ is a small deficient set.
\end{lemma}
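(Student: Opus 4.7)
The plan is a standard uncrossing argument that combines submodularity of $|\Nbr{\cdot}|$ with the $\ell$-connectivity assumption, plus one convenient identity for vertex-complements under union.

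First I would establish two set-theoretic facts about $\inv{\cdot}$. Directly from the definition $\inv{X}=V\setminus(X\cup\Nbr{X})$, one checks $\inv{U\cup W}=\inv{U}\cap\inv{W}$ (a vertex avoids $U\cup W$ and all its neighbors iff it avoids both $U$ and its neighbors and both $W$ and its neighbors) and $\inv{U}\cup\inv{W}\subseteq\inv{U\cap W}$ (a vertex in $\inv{U}$ cannot lie in, or be adjacent to, the smaller set $U\cap W$). Using these, I verify the terminal conditions for $U\cap W$ and $U\cup W$: by hypothesis $(U\cap W)\cap T\ne\emptyset$, and $\inv{U\cap W}\cap T\supseteq\inv{U}\cap\inv{W}\cap T\ne\emptyset$; symmetrically $(U\cup W)\cap T\supseteq U\cap T\ne\emptyset$ and $\inv{U\cup W}\cap T=\inv{U}\cap\inv{W}\cap T\ne\emptyset$.

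Next I bound neighborhood sizes. Since $T$ is $\ell$-connected in the current graph and the sets $U\cap W$, $U\cup W$ are terminal-separating by the previous paragraph, we have $|\Nbr{U\cap W}|\ge\ell$ and $|\Nbr{U\cup W}|\ge\ell$. The deficiency of $U$ and $W$ gives $|\Nbr{U}|,|\Nbr{W}|\le\ell$ (the opposite inequality again comes from $\ell$-connectivity), so both are exactly $\ell$. Applying submodularity,
\[
|\Nbr{U\cap W}|+|\Nbr{U\cup W}| \;\le\; |\Nbr{U}|+|\Nbr{W}| \;=\; 2\ell,
\]
forces equality throughout, so both $|\Nbr{U\cap W}|$ and $|\Nbr{U\cup W}|$ equal $\ell$, i.e., both are deficient.

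Finally I handle smallness. If, say, $U$ is small, then $|U\cap T|\le|\inv{U}\cap T|$, and chaining the trivial containment $U\cap W\subseteq U$ with $\inv{U}\subseteq\inv{U\cap W}$ (from the first step) yields
\[
|(U\cap W)\cap T| \;\le\; |U\cap T| \;\le\; |\inv{U}\cap T| \;\le\; |\inv{U\cap W}\cap T|,
\]
so $U\cap W$ is small. I do not anticipate a serious obstacle; the one point to be careful about is the $\inv{U\cup W}=\inv{U}\cap\inv{W}$ identity, which is what allows the hypothesis $\inv{U}\cap\inv{W}\cap T\ne\emptyset$ to feed directly into the $\ell$-connectivity lower bound on $|\Nbr{U\cup W}|$ and thereby pin submodularity to equality.
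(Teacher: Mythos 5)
Your proposal is correct and follows essentially the same route as the paper's proof: establish $\inv{(U\cup W)}=\inv{U}\cap\inv{W}$ and $\inv{U}\subseteq\inv{(U\cap W)}$ to verify that $U\cap W$ and $U\cup W$ are terminal-separating, invoke $\ell$-connectivity (Menger) together with $|\Nbr{U}|=|\Nbr{W}|=\ell$ and submodularity to force $|\Nbr{U\cap W}|=|\Nbr{U\cup W}|=\ell$, and then obtain smallness from the chain $|(U\cap W)\cap \setT|\le|U\cap\setT|\le|\inv{U}\cap\setT|\le|\inv{(U\cap W)}\cap\setT|$. No gaps; the one identity you flag as delicate is exactly the step the paper verifies explicitly.
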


\begin{proof}

Suppose $U\cap W\cap \setT\neq\emptyset$ and $\inv{U}\cap\inv{W}\cap
\setT\neq\emptyset$. Note that 
\begin{align*}
\inv{U}\cap\inv{W}\cap \setT 
 &= (V-(U\cup \Nbr{U}))\cap (V-(W\cup \Nbr{W}))\cap \setT \\
 &= (V-(U\cup W\cup \Nbr{U}\cup \Nbr{W})) \cap \setT \\
 &= (V-((U\cup W)\cup \Nbr{U\cup W})) \cap \setT \\
 &= \inv{(U\cup W)}\cap \setT.
\end{align*}

Moreover, $\inv{(U\cup W)}\subseteq\inv{(U\cap W)}$. This means that
\[
(U\cap W)\cap \setT\neq\emptyset,\quad
(U\cup W)\cap \setT\neq\emptyset,\quad 
\inv{(U\cup W)}\cap \setT\neq\emptyset\quad \mbox{and} \quad  
\inv{(U\cap W)}\cap \setT\neq\emptyset.
\]

Hence, by Menger Theorem, we have $|\Nbr{U\cup W}|\geq\ell$
and $|\Nbr{U\cap W}|\geq\ell$. Moreover, since $U,W$ are deficient
sets, we have $|\Nbr{U}|=|\Nbr{W}|=\ell$. It then follows by the 
submodularity of $|\Nbr{.}|$ that
\[
2\ell\leq |\Nbr{U\cup W}|+|\Nbr{U\cap W}| \leq |\Nbr{U}| +
|\Nbr{W}|=2\ell. 
\]

Thus, $|\Nbr{U\cap W}|=|\Nbr{U\cup W}|=\ell$. This implies that both 
$U\cup W$ and $U\cap W$ are deficient sets. Moreover, suppose $U$ or
$W$ is a small deficient set. Without loss of generality, assume 
that $U$ is a small deficient set. Then $U\cap W$ is a small deficient
set because $U\cap W\subseteq U$. Thus,
\[
|U\cap W\cap \setT|\leq |U\cap \setT|\leq |\inv{U}\cap \setT| 
  \leq |\inv{(U\cap W)}\cap \setT| 
\]
\end{proof}


By a {\em core}, we mean a small deficient set $C$ that is
inclusionwise minimal. In other words, $C$ is a core if it
is a small deficient set that does not contain another such set. 
It can be seen that any small deficient set $U$ contains at least one
core. 

The {\em halo-family} of a core $C$,
denoted by $\halo(C)$, is
the set of all small deficient sets that contain
$C$ and contain no other cores; that is,
\[
\halo(C)=\{U: \mbox{$U$ is a small deficient set,
	$C\subseteq U$, and
	there is no core $D\neq C$ such that $D\subseteq U$}\}.
\]

%
The {\em halo-set} of a core $C$, denoted by $H(C)$,
is the union of all the sets in $\halo(C)$; that is, 

\[
H(C) = \bigcup\{U:U\in\halo(C)\}
\]

An example of cores, halo-families and halo-sets is illustrated in 
Figure~\ref{fig:core-haloset}.

\begin{figure}[hbt]
\fbox{\begin{minipage}{\textwidth}
     \centerline{ \includegraphics[scale=0.6]  {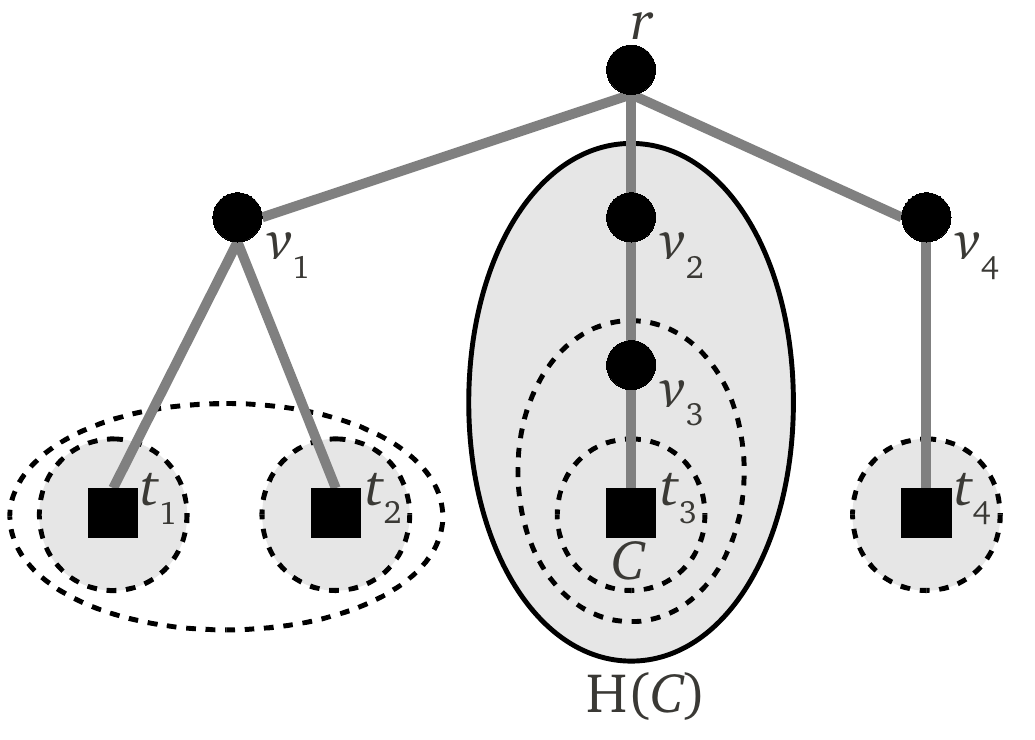} }
\caption[An example of cores, halo-families and halo-sets]
{
The figure shows an example of cores, halo-families
and halo-sets for $\ell=1$. In the figure, we are 
given a graph $\hG=(V,\hE)$. The graph $\hG$ is a tree constructed by
adding a vertex $r$ as a root. Then we add terminals
$t_1,t_2,t_3,t_4$ and add paths $(t_1,v_1,r)$, $(t_2,v_1,r)$,
$(t_3,v_3,v_2,r)$ and $(t_4,v_4,r)$ connecting terminals to the
root vertex. Clearly, $\setT=\{t_1,t_2,t_3,t_4\}$ is $\ell$-connected
in $\hG$. Every leaf vertex of the tree $\hG$ forms a core.
For example, $C=\{t_3\}$ is a core. The halo-family of
$C$ is $\halo(C)=\{\{t_3\},\{t_3,v_3\},\{t_3,v_2,v_3\}\}$, and the
halo-set of $C$ is $\H(C)=\{t_3,v_2,v_3\}$. Observe that the set 
$\{t_1,t_2\}$ is a small deficient set, but it does not belong to any
halo-family. 
}
\label{fig:core-haloset}
\end{minipage}}
\end{figure}

\ \\ {\bf Remark}: 
We remark that cores and halo-sets of subset $\ell$-connected
graphs can be computed in polynomial time. In fact, algorithms for 
computing cores and halo-sets of the $k$-vertex connected spanning
subgraph problem also apply to the subset $k$-connectivity problem. 
See~\cite{KN05,FL08,Bundit-Thesis}. \\

Some important properties of cores and halo-families that we will
require are stated below. 

\begin{lemma}[Disjointness Lemma]
\label{lmm:sskconn-disjoint}
Consider any two distinct cores $C$ and $D$. For any deficient sets 
$U\in\halo(C)$ and $W\in\halo(D)$, either 
$U\cap W\cap \setT=\emptyset$ or $\inv{U}\cap\inv{W}\cap \setT=\emptyset$. 
\end{lemma}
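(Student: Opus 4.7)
The plan is to proceed by contradiction, assuming there exist $U \in \halo(C)$ and $W \in \halo(D)$ with $C \neq D$ such that both $U \cap W \cap T \neq \emptyset$ and $\inv{U} \cap \inv{W} \cap T \neq \emptyset$. First I would dispose of the trivial case $U = W$: if this holds, then $U \in \halo(C) \cap \halo(D)$, and since the defining property of $\halo(C)$ is that $C$ is the unique core contained in $U$, we would immediately get $C = D$, a contradiction. So I can assume $U \neq W$.

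Next I would invoke the Uncrossing Lemma (Lemma~\ref{lmm:uncross}). Both $U$ and $W$ are small deficient sets by virtue of lying in halo-families, and the two hypotheses $U \cap W \cap T \neq \emptyset$ and $\inv{U} \cap \inv{W} \cap T \neq \emptyset$ are exactly what the uncrossing lemma requires. Therefore $U \cap W$ is itself a small deficient set (the lemma's "moreover" clause applies since at least one of $U$, $W$ is small).

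Now the final step uses the observation that every small deficient set contains at least one core, so there is some core $C' \subseteq U \cap W$. Since $C' \subseteq U$ and $U \in \halo(C)$, the definition of halo-family forces $C' = C$ (the only core that can sit inside $U$ is $C$ itself). Symmetrically, $C' \subseteq W$ and $W \in \halo(D)$ force $C' = D$. Hence $C = D$, contradicting the hypothesis that $C$ and $D$ are distinct cores, which completes the proof.

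The only step with any subtlety is verifying that the Uncrossing Lemma applies cleanly; there is no real obstacle since the hypotheses on $U \cap W \cap T$ and $\inv{U} \cap \inv{W} \cap T$ are precisely what is assumed for contradiction, and "smallness" is inherited because halo-family members are small by definition. The rest is a direct application of the minimality characterization built into the definition of a halo-family.
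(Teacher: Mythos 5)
Your proof is correct and follows essentially the same route as the paper: assume both intersections with $T$ are nonempty, apply the Uncrossing Lemma to conclude $U\cap W$ is a small deficient set, and then use the fact that any core inside $U\cap W$ would have to equal both $C$ and $D$, a contradiction. Your extra handling of the case $U=W$ (since the Uncrossing Lemma is stated for distinct sets) is a small point of additional care, not a different argument.
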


\begin{proof}
Suppose to the contrary that
$U\cap W\cap \setT\neq\emptyset$ and
$\inv{U}\cap\inv{W}\cap \setT \neq\emptyset$.
Then, by Lemma~\ref{lmm:uncross}, $U\cap W$ is a small deficient set.
Thus, $U\cap W$ contains a core. This core is either $C$ or $D$ or
another core distinct from $C$ and $D$. In each case, we have a
contradiction.
\end{proof}

The next result gives an upper bound on the number of halo-sets
that contain a chosen terminal, which is a key for the design of our
algorithm. 

\begin{lemma}[Upper bound]
\label{lmm:bound-1}
For any terminal $t\in \setT$, 
the number of cores $C$ such that $t\in H(C)$ is at most 
$\frac{2(\nterms-1)}{\nterms-\ell}$. 
\end{lemma}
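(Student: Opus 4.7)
The plan is to pick, for each core $C$ with $t \in H(C)$, a witness set $U \in \halo(C)$ containing $t$, and then show that the vertex-complements of these witnesses carve up many disjoint chunks of $T-\{t\}$, each of size at least $(\nterms-\ell)/2$.

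More concretely, suppose $C_1,\dots,C_m$ are the distinct cores with $t\in H(C_i)$. By definition of the halo-set, for each $i$ we may pick $U_i\in\halo(C_i)$ with $t\in U_i$. First I would apply the Disjointness Lemma (Lemma~\ref{lmm:sskconn-disjoint}) to the pair $U_i,U_j$ for any $i\neq j$. Since $t\in U_i\cap U_j\cap \setT$, the first alternative of that lemma fails, so we must have $\inv{U_i}\cap\inv{U_j}\cap\setT=\emptyset$.

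Next I would note two further facts: since $t\in U_i$, $t$ cannot lie in the vertex-complement $\inv{U_i}=V-(U_i\cup\Nbr{U_i})$, so $\inv{U_i}\cap\setT\subseteq\setT-\{t\}$; and since each $U_i$ is a small deficient set, Proposition~\ref{prop:size} gives $|\inv{U_i}\cap\setT|\geq (\nterms-\ell)/2$. Combining these, the sets $\{\inv{U_i}\cap\setT\}_{i=1}^m$ are pairwise disjoint subsets of $\setT-\{t\}$, each of size at least $(\nterms-\ell)/2$. A simple counting argument then yields
\[
m\cdot\frac{\nterms-\ell}{2}\ \leq\ \sum_{i=1}^m|\inv{U_i}\cap\setT|\ \leq\ |\setT-\{t\}|\ =\ \nterms-1,
\]
which rearranges to $m\leq\frac{2(\nterms-1)}{\nterms-\ell}$, as desired.

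I do not expect any real obstacle here: the Disjointness Lemma does almost all of the work, and the only subtlety is verifying that $t\notin\inv{U_i}$ (so that we get disjoint subsets of $\setT-\{t\}$ rather than of $\setT$), which follows immediately from $t\in U_i$ and the definition of $\inv{U_i}$. The size lower bound on $|\inv{U_i}\cap\setT|$ is exactly what Proposition~\ref{prop:size} was set up to deliver.
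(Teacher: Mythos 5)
Your proposal is correct and follows essentially the same route as the paper's proof: pick a witness $U_i\in\halo(C_i)$ containing $t$, use the Disjointness Lemma to make the sets $\inv{U_i}\cap\setT$ pairwise disjoint, lower-bound each by $(\nterms-\ell)/2$ via Proposition~\ref{prop:size}, and count. Your explicit observation that $t\notin\inv{U_i}$, so the disjoint sets sit inside $\setT-\{t\}$, is exactly the (implicit) reason the paper's numerator is $\nterms-1$.
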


\begin{proof}
Let $C_1,C_2,\ldots,C_q$ be distinct
cores such that $t\in H(C_i)$ for all $i=1,2,\ldots,q$.
For each $i=1,2,\ldots,q$, since $t$ is in the halo-set $H(C_i)$, there
must exist a deficient set $U_i$ in the halo-family $\halo(C_i)$
that contains $t$. It then follows that $t\in\bigcap_{i=1}^qU_i$. By
the Disjointness Lemma (Lemma~\ref{lmm:sskconn-disjoint}), for $i\neq j$,
$U_i\cap U_j\cap \setT\neq\emptyset$ only if $\inv{U_i}\cap\inv{U_j}\cap 
\setT=\emptyset$. This is because $U_i$ and $U_j$ are deficient sets of
different halo-families. For $i=1,2,\ldots,q$, observe that
$U_i$ is small. Hence, by Proposition~\ref{prop:size},
we have $|\inv{U_i}\cap \setT|\geq(\nterms-\ell)/2$.
Thus, the upper bound on the number halo-sets
that contain $t$ is
\[
\frac{\nterms-1}{(\nterms-\ell)/2}=\frac{2(\nterms-1)}{\nterms-\ell}
\]
\end{proof}


\subsection{Covering halo-families via rooted subset
  $(\ell+1)$-connectivity} 
\label{sec:cover-halo}

We say that an edge $e=(u,v)$ 
{\em covers} a deficient set $U$ if $e$ connects $U$ and $\inv{U}$; 
that is, $u\in U$ and $v\in\inv{U}$.
Clearly, $e$ covers $U$ if $e$ covers $\inv{U}$. 
Observe that if $e$ covers $U$,
then after adding the edge $e$ to the current graph,
$U$ is no longer a deficient set.
Now, consider any core $C$. We say that a set of edges $F$ 
{\em covers} the halo-family of $C$ if each deficient set $U$ in
$\halo(C)$ is covered by some edge of $F$.
For a terminal $r\in T$, we say that the terminal $r$ {\em hits} the
halo-family $\halo(C)$ if $r$ is in $C$ or $r$ is in the
vertex-complement of the halo-set of $C$; that is, $r$ hits
$\halo(C)$ if $r\in C$ or $r\in\H(C)^*$.
For a set of terminals $\setS\subseteq\setT$, we say that $\setS$ 
{\em hits} a halo-family $\halo(C)$ if there is a terminal $r\in\setS$
that hits $\halo(C)$. 
The following lemma shows that if $r$ hits the halo-family
$\halo(C)$, then we can find a set of edges $F$ that covers $\halo(C)$
by applying the rooted subset $(\ell+1)$-connectivity algorithm with
$r$ as the root.

\begin{lemma}
\label{lmm:cover-halo}
Consider a set of edges $F$ whose addition to $\hG$ makes
the resulting graph $\hG+F$ $(\ell+1)$-connected 
from a terminal $r$ to $\setT$.
Let $C$ be any core.
If $r\in C$ or $r\in\inv{H(C)}$, then
$F$ covers all deficient sets in the halo-family of $C$. 
\end{lemma}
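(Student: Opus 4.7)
\medskip

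The plan is to fix any $U \in \halo(C)$ and exhibit an edge of $F$ with one endpoint in $U$ and the other in $\inv{U}$, using the fact that $U$ is a deficient set whose neighborhood has only $\ell$ vertices while $\hG+F$ has $\ell+1$ openly disjoint paths between well-chosen endpoints.

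First I will verify that the root $r$ lies on one side of the cut $(U, \inv U)$. If $r \in C$, then $r \in U$ because $C \subseteq U$ by the definition of $\halo(C)$. If instead $r \in \inv{H(C)}$, I claim $r \in \inv U$: since $U \subseteq H(C)$, every neighbor of $U$ is either in $H(C)$ or in $\Nb(H(C))$, so $U \cup \Nb(U) \subseteq H(C) \cup \Nb(H(C))$, hence $\inv{H(C)} \subseteq \inv U$. Symmetrically, because $U$ is deficient, $U \cap T \neq \emptyset$ and $\inv U \cap T \neq \emptyset$, so we can pick a terminal $t$ lying in the opposite part from $r$.

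Next I apply Menger's theorem to the pair $(r,t)$ in $\hG + F$. By hypothesis there are $\ell+1$ openly disjoint $r,t$-paths in $\hG+F$. Now observe that in $\hG$ alone, every edge with an endpoint in $U$ has its other endpoint in $U \cup \Nb(U)$ (by definition of the neighborhood), so $\hG$ contains no edge of the cut $\deltaout(U, \inv U)$. Therefore, in the graph $(\hG + F) - \Nb(U)$, the only edges that can cross from the $U$-side to the $\inv U$-side are edges of $F$. Since $|\Nb(U)| = \ell$ (because $U$ is deficient and $\hG$ is subset $\ell$-connected, so $|\Nb(U)| \ge \ell$ and also $< \ell+1$), deleting the $\ell$ vertices of $\Nb(U)$ from an $(\ell+1)$-connected-from-$r$ graph still leaves an $r,t$-path; that path must traverse some edge of $F$ going between $U$ and $\inv U$, which is exactly what it means for $F$ to cover $U$.

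Since $U \in \halo(C)$ was arbitrary, $F$ covers every deficient set in $\halo(C)$. I do not expect any genuine obstacle: the content is essentially a clean application of Menger's theorem together with the observation that a deficient set has no cut edges in $\hG$. The only point requiring a little care is the containment $\inv{H(C)} \subseteq \inv U$, which I handle above via the submodular-style inclusion $U \cup \Nb(U) \subseteq H(C) \cup \Nb(H(C))$.
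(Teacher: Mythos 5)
Your proof is correct and follows the same route as the paper: the rooted $(\ell+1)$-connectivity forces every deficient set of $\hG$ that contains $r$ (or whose vertex-complement contains $r$) to be covered, and then $r\in C$ gives $r\in U$ while $r\in\inv{H(C)}$ gives $r\in\inv{U}$ for every $U\in\halo(C)$. You merely spell out two details the paper leaves implicit --- the Menger/cut-counting step using $|\Nbr{U}|=\ell$ and the containment $\inv{H(C)}\subseteq\inv{U}$ --- and both are handled correctly.
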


\begin{proof}
Consider the graph $\hG+F$. By the construction, $\hG+F$ has
$(\ell+1)$ openly disjoint paths from $r$ to every terminal of
$\setT$.
This means that $F$ covers all deficient sets of $\hG$ that contains
$r$. 
If $r\in C$, then $r\in U$ for all deficient sets $U\in\halo(C)$. 
So, $F$ covers $\halo(C)$.
Similarly,
If $r\in \inv{H(C)}$, then $r\in\inv{U}$ for all deficient sets
$U\in \halo(C)$. 
So, again, $F$ covers $\halo(C)$ because an edge $e\in F$ covers $\inv{U}$
if and only if $e$ covers $U$, and the lemma follows.
\end{proof}


\subsection{Preprocessing to decrease the number of cores}
\label{sec:reduce-cores}

In this section, we describe the preprocessing algorithm that
decreases the number of cores to
$O\left(\frac{\ell \nterms}{\nterms-\ell}\right)$.
We apply the following {\em root padding algorithm} in the
preprocessing step.  

\ \\{\bf The root padding algorithm: }
The algorithm takes as an input a graph $G=(V,E)$ with the given edge
costs, a subset of terminals $\setR\subseteq\setT$, and a connectivity
parameter $\rho\leq|\setR|$. 
We construct a padded graph by adding a new vertex $\hr$ and new edges
of zero cost from $\hr$ to each terminal of $\setR$. Then we apply
the rooted subset $\rho$-connectivity algorithm to the padded graph
with the set of terminals $\setT$ and the root $\hr$.
We denote a solution subgraph (of the padded graph) by 
$\hG = (V\cup\{\hr\},\,F\cup\{(\hr,t):t\in \setR\})$, where
$F\subseteq E$. 
Then the algorithm outputs the subgraph (of the original graph)
$\hG-\hr=(V,F)$. 
The following result shows that, in the resulting graph
$\hG$, every deficient set contains at least one terminal of
$\setR$.

\begin{lemma}[root padding]
  \label{lem:rootpad}
  Suppose we apply the root padding algorithm as above, and it finds a
  subgraph $\hG-\hr=(V,F)$.
  Then every deficient set of $\hG-\hr$ (with respect to
  $\rho$-connectivity of $\setT$ in $\hG-\hr$)
  contains at least one terminal of $\setR$.
\end{lemma}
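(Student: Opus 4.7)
The plan is a proof by contradiction using Menger's theorem in the padded graph. Suppose for contradiction that there exists a deficient set $U$ of $\hG - \hr$ (with respect to $\rho$-connectivity of $\setT$) such that $U \cap \setR = \emptyset$; I will show that $U$ cannot in fact be deficient in $\hG - \hr$, which is the desired contradiction.

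First I would compare the neighborhood of $U$ in $\hG$ with its neighborhood in $\hG - \hr$. The only edges present in $\hG$ but absent from $\hG - \hr$ are the zero-cost edges $(\hr, t)$ for $t \in \setR$, and each such edge has its non-$\hr$ endpoint in $\setR \subseteq V - U$ (by the hypothesis $U \cap \setR = \emptyset$). Consequently $\hr$ is not adjacent to any vertex of $U$ in $\hG$, so $\Gamma_{\hG}(U) = \Gamma_{\hG - \hr}(U)$; moreover $\hr \in V(\hG) - (U \cup \Gamma_{\hG}(U))$, i.e., $\hr$ lies in the vertex-complement of $U$ inside the padded graph.

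Next I would invoke the rooted $\rho$-connectivity guarantee on $\hG$. Pick any $t \in U \cap \setT$ (nonempty because $U$ is deficient in $\hG - \hr$); note that $t \notin \setR$, so $t$ is not adjacent to $\hr$ in $\hG$. By construction, $\hG$ contains $\rho$ openly disjoint paths from $\hr$ to $t$. Since $\hr$ and $t$ lie on opposite sides of the vertex separator $\Gamma_{\hG}(U)$, each such path must contain a distinct internal vertex of $\Gamma_{\hG}(U)$, yielding $|\Gamma_{\hG}(U)| \geq \rho$. Combined with the first step, this gives $|\Gamma_{\hG - \hr}(U)| \geq \rho$, contradicting the assumption that $U$ is deficient in $\hG - \hr$.

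The only delicate point is the first paragraph: one must check that augmenting with $\hr$ and its $\setR$-edges does not enlarge the neighborhood of $U$, and this uses precisely the assumption $U \cap \setR = \emptyset$. Everything else is a standard application of Menger's theorem to the cut $\Gamma_{\hG}(U)$ separating $\hr$ from a terminal inside $U$ in the padded graph.
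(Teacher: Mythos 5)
Your proof is correct and follows essentially the same route as the paper: both argue by contradiction from the $\rho$ openly disjoint $\hr,t$-paths guaranteed in the padded graph, using that when $U\cap\setR=\emptyset$ the padding does not change the neighborhood of $U$. Your phrasing via Menger's theorem applied directly to the separator $\Gamma(U)$ is just a slightly more explicit rendering of the paper's observation that $\hG-\Nbr{U}$ could not contain such paths.
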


\begin{proof}
  Observe that $\hG$ has $\rho$ openly disjoint $\hr,t$-paths,
  for all $t\in\setT$.  Suppose $U\subseteq V$ is a deficient set of
  $\hG-\hr$ that contains none of terminals of $\setR$. Thus, $U$
  contains another terminal $t\in T\setminus \setR$ and $|\Nbr{U}|<\rho$.
  Then $(\hG-\hr)-\Nbr{U}$ has no path between $t$ and a terminal
  of $\setR$.  This also holds for $\hG-\Nbr{U}$ because adding
  $\hr$ and the edges from $\hr$ to every terminals of $\setR$
  cannot give a path between $t$ and a terminal of $\setR$.  This is a
  contradiction since $\hG$ should have $\rho$ openly disjoint
  $\hr,t$-paths.
\end{proof}

\noindent {\bf Remark}:
Consider an instance $\Pi_{root}$ of the rooted subset $k$-connectivity
problem that is obtained from an instance $\Pi_{subset}$ of the subset 
$k$-connectivity problem by either
~(1) picking one of the terminals as the root or
~(2) applying the root padding algorithm using any $k$ terminals of
$\Pi_{subset}$.
Then the cost of the optimal solution to $\Pi_{root}$ is at most the
cost of the optimal solution to $\Pi_{subset}$ because any feasible
solution to $\Pi_{subset}$ gives a feasible solution to $\Pi_{root}$
(but not vice-versa). \\

Next, recall that $\nterms\geq 2\ell$. 
We apply the root padding algorithm in Lemma~\ref{lem:rootpad} 
to any subset $\setR$ of $(\ell+1)$ terminals with $\rho=(\ell+1)$.
By Theorem~\ref{thm:rooted-kconn}, this incurs a cost of at most
$O((k\log{k})\cdot\opt)$ 
Moreover, the algorithm adds a set of edges
to the current graph such that
every deficient set of the resulting graph
contains at least one terminal of $\setR$.
Thus, each core of the resulting graph
contains at least one terminal of $\setR$. 
By Lemma~\ref{lmm:bound-1}, each terminal is in
$O\left(\frac{\nterms}{\nterms-\ell}\right)=O(1)$ halo-sets.
Hence, the number of cores in the resulting graph is at most 
$O(\ell)$.
This gives the next result.

\begin{lemma}
\label{lmm:num-cores}
Given a subset $\ell$-connected graph, where $\nterms\geq 2\ell$, 
there is an $f(k)$-approximation algorithm that decreases the number
of cores to $O(\ell)$, where $f(k)$ is the best known approximation
guarantee for the rooted subset $k$-connectivity problem.
\end{lemma}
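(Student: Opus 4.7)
The plan is to follow the construction sketched in the paragraph preceding the lemma: apply the root padding algorithm of Lemma~\ref{lem:rootpad} a single time to a judiciously chosen small set $\setR$, then argue that every core of the resulting graph must meet $\setR$ and apply Lemma~\ref{lmm:bound-1} to control how many cores a single terminal can sit inside.

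Concretely, I would fix an arbitrary subset $\setR\subseteq\setT$ of size $\ell+1$; such a set exists since $\nterms\geq 2\ell\geq \ell+1$. Invoking the root padding algorithm with connectivity parameter $\rho=\ell+1$ adds the zero-cost edges from the new root $\hr$ to each terminal of $\setR$ and runs the rooted subset $(\ell+1)$-connectivity algorithm on the padded graph. The returned edge set $F\subseteq E$ has cost at most $f(k)\cdot\opt_{\text{rooted}}$, and by the remark following Lemma~\ref{lem:rootpad} any feasible solution to the original subset $k$-connectivity instance gives a feasible solution to the padded rooted instance, so $\opt_{\text{rooted}}\leq\opt$. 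Thus the total cost incurred is at most $f(k)\cdot\opt$.

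For the structural claim, Lemma~\ref{lem:rootpad} guarantees that every deficient set of the augmented graph $\hG-\hr$ (with respect to $(\ell+1)$-connectivity of $\setT$) contains some terminal of $\setR$; in particular every core does. Fix a terminal $r\in\setR$. Since $r\in C\subseteq H(C)$ for any core $C$ that contains $r$, Lemma~\ref{lmm:bound-1} implies that the number of such cores is at most $\frac{2(\nterms-1)}{\nterms-\ell}$. The hypothesis $\nterms\geq 2\ell$ gives $\nterms-\ell\geq \nterms/2$, so this quantity is at most $4$. Summing over the $\ell+1$ terminals of $\setR$ yields a total of at most $4(\ell+1)=O(\ell)$ cores.

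The only mild subtlety in this plan is the cost bookkeeping: one must justify that the approximation guarantee of the rooted subroutine on the padded instance translates to a guarantee against $\opt$ for the original subset $k$-connectivity problem, which is exactly what the remark after Lemma~\ref{lem:rootpad} accomplishes. Everything else is an immediate combination of Lemma~\ref{lem:rootpad} and Lemma~\ref{lmm:bound-1}, with the choice $|\setR|=\ell+1$ and $\rho=\ell+1$ dictated by the desire to expose exactly those deficient sets with $|\Nbr{U}|\leq\ell$ that remain in the current $\ell$-connected graph.
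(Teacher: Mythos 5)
Your proof is correct and follows essentially the same route as the paper: pad with a root attached to an arbitrary set $\setR$ of $\ell+1$ terminals, run the rooted subset $(\ell+1)$-connectivity algorithm (charging its cost against $\opt$ via the remark after Lemma~\ref{lem:rootpad}), and then combine Lemma~\ref{lem:rootpad} with Lemma~\ref{lmm:bound-1} to conclude that each of the $\ell+1$ terminals lies in $O(1)$ cores, giving $O(\ell)$ cores in total. Your write-up just spells out the constant $4$ and the cost bookkeeping more explicitly than the paper does.
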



\subsection{Thickness of terminals}
\label{sec:thickness}
Consider a graph $\hG$ such that $\setT$ is $\ell$-connected in $\hG$.
We define the {\em thickness} of a terminal $t\in \setT$ to be the
number of halo-families $\halo(C)$ such that $t\in \Nbr{H(C)}$. 
Thus, the thickness of a terminal $t$ is 
$|\{\halo(C):C\mbox{ is a core}, t \in \Nbr{H(C)}\}|$.

The following lemmas show the existence of a terminal with low thickness. 

\begin{lemma}
\label{lmm:halo-nbr}
For every core $C$, $|\Nbr{H(C)}|\leq \ell$.
\end{lemma}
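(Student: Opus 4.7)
The plan is to show that the neighborhood bound enjoyed by each individual member of $\halo(C)$ lifts to their union $H(C)$ by exploiting submodularity of $|\Nbr{\cdot}|$ together with the fact that every set in $\halo(C)$ contains the core $C$. Enumerate $\halo(C)=\{U_1,\ldots,U_m\}$ arbitrarily and set $V_j := U_1\cup\cdots\cup U_j$. I will prove by induction on $j$ that $|\Nbr{V_j}|\le\ell$. The base case $j=1$ is immediate: $U_1$ is a small deficient set, so $|\Nbr{U_1}|=\ell$.

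For the inductive step, submodularity of $|\Nbr{\cdot}|$ gives
\[
|\Nbr{V_{j-1}\cup U_j}| + |\Nbr{V_{j-1}\cap U_j}| \;\le\; |\Nbr{V_{j-1}}| + |\Nbr{U_j}| \;\le\; 2\ell,
\]
using the inductive hypothesis and $|\Nbr{U_j}|=\ell$. Hence it suffices to show $|\Nbr{V_{j-1}\cap U_j}|\ge\ell$ and rearrange. For this I invoke Menger's theorem, for which I need terminals both inside $V_{j-1}\cap U_j$ and in its vertex-complement. The first is free: each $U_i$ contains $C$, so $C\subseteq V_{j-1}\cap U_j$, and since $C$ is a deficient set we have $C\cap\setT\neq\emptyset$. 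For the second, I will verify the containment $\inv{U_j}\subseteq\inv{(V_{j-1}\cap U_j)}$ --- any vertex outside $U_j\cup\Nbr{U_j}$ has no neighbor in $U_j$ and therefore none in $V_{j-1}\cap U_j\subseteq U_j$ --- and then use smallness of $U_j$ (together with $U_j\cap\setT\neq\emptyset$) to conclude $\inv{U_j}\cap\setT\neq\emptyset$, transferring a terminal into $\inv{(V_{j-1}\cap U_j)}$.

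The only delicate step is the auxiliary inclusion $\inv{U_j}\subseteq\inv{(V_{j-1}\cap U_j)}$; once that is in place, Menger yields $|\Nbr{V_{j-1}\cap U_j}|\ge\ell$, and the submodular inequality gives $|\Nbr{V_j}|\le\ell$, closing the induction. Taking $j=m$ gives $|\Nbr{H(C)}|\le\ell$. It is worth emphasizing what this argument does \emph{not} need: I never show that $V_j$ is itself a small deficient set, which would require uncrossing the smallness property and is not guaranteed for arbitrary unions. The conclusion of the lemma only concerns the neighborhood size, and submodularity supplies exactly that bound.
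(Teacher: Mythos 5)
Your proof is correct and follows essentially the same route as the paper's: induction over the sets of $\halo(C)$, applying submodularity of $|\Nbr{\cdot}|$ together with the Menger bound $|\Nbr{V_{j-1}\cap U_j}|\ge\ell$, justified exactly as in the paper by $C\subseteq V_{j-1}\cap U_j$ and $\inv{U_j}\subseteq\inv{(V_{j-1}\cap U_j)}$. The only cosmetic difference is that you fold the case $U_j\subseteq V_{j-1}$ into the general argument rather than treating it separately.
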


\begin{proof}
We use induction on the number of deficient sets in $\halo(C)$.
For the induction basis, $\halo(C)$ has one deficient set $U$,
and $H(C)=U$.
Then $\Nbr{H(C)} = \Nbr{U}$ has size $\ell$ since the graph
is subset $\ell$-connected.

Suppose that $\mathcal{U}$ is the union of $j$ deficient sets that
each contains the core $C$, and suppose that
$|\Nbr{\mathcal{U}}|\le\ell$. 
Consider another deficient set $W$ that contains $C$.
Our goal is to show that $\Nbr{\mathcal{U}\cup{W}}$ has size at most
$\ell$. 
If $W\subseteq\mathcal{U}$, then we are done.
Otherwise, we apply the submodularity of $|\Nbr{\cdot}|$.
Observe that $\mathcal{U}\cap{W}$ contains a terminal since 
$\mathcal{U}\cap{W}\supseteq{C}$, and $\inv{(\mathcal{U}\cap{W})}$
contains a terminal since 
$\inv{(\mathcal{U}\cap W)}\supseteq\inv{W}$ and $\inv{W}$ contains a
terminal. 
Hence, $\Nbr{\mathcal{U}\cap{W}}$ has size at least $\ell$.
Thus, we have  
\[ 2\ell \ge |\Nbr{\mathcal{U}}| + |\Nbr{W}| \ge
	|\Nbr{\mathcal{U}\cup{W}}| + |\Nbr{\mathcal{U}\cap{W}}| \ge 
	|\Nbr{\mathcal{U}\cup{W}}| + \ell.
\]
This implies that $\Nbr{\mathcal{U}\cup{W}}$ has size at most $\ell$,
and the lemma follows.
\end{proof}

The following lemma shows the existence of a terminal with low
thickness. 

\begin{lemma}
\label{lmm:low-thickness}
Consider a subset $\ell$-connected graph.
Let $q$ denote the number of halo-families.
Then there exists a terminal $t\in \setT$ with thickness
at most $\frac{\ell q}{\nterms}$.
\end{lemma}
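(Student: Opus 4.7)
The plan is a standard averaging/double-counting argument, leveraging the bound on $|\Nbr{H(C)}|$ from Lemma~\ref{lmm:halo-nbr}.

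First I would set up the double count. Let $C_1,\ldots,C_q$ be the cores (equivalently, the halo-families $\halo(C_i)$). By the very definition of thickness,
\[
\sum_{t\in \setT}\text{thickness}(t) \;=\; \sum_{t\in \setT} \bigl|\{i : t\in \Nbr{H(C_i)}\}\bigr| \;=\; \sum_{i=1}^{q} \bigl|\Nbr{H(C_i)}\cap \setT\bigr|,
\]
where the second equality just swaps the order of summation over the incidence pairs $(t,i)$ with $t\in\Nbr{H(C_i)}\cap \setT$.

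Next I would invoke Lemma~\ref{lmm:halo-nbr}, which asserts that $|\Nbr{H(C_i)}|\le \ell$ for every core $C_i$. Trivially $|\Nbr{H(C_i)}\cap \setT|\le |\Nbr{H(C_i)}|\le \ell$, so
\[
\sum_{t\in \setT}\text{thickness}(t) \;\le\; \ell\, q.
\]
By averaging, the minimum thickness over terminals is at most the average, giving
\[
\min_{t\in\setT} \text{thickness}(t) \;\le\; \frac{\ell q}{\nterms},
\]
which is exactly the claim.

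There is essentially no obstacle here: the real work was already done in Lemma~\ref{lmm:halo-nbr}, which used the uncrossing/submodularity argument to control $|\Nbr{H(C)}|$. Once that neighborhood bound is in hand, Lemma~\ref{lmm:low-thickness} is an immediate double-counting consequence, so the proof is a two-line averaging argument.
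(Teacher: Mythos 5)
Your proof is correct and is essentially the paper's own argument: the paper phrases the same double count via a bipartite incidence graph between terminals and halo-families (each halo-family touching at most $\ell$ terminals by Lemma~\ref{lmm:halo-nbr}, hence at most $\ell q$ edges, hence some terminal has degree at most $\frac{\ell q}{\nterms}$), which is exactly your summation-and-averaging argument in different notation.
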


\begin{proof}
Consider the following bipartite incidence graph $B$
of terminals and halo-families:
$B$ has a vertex for each terminal and each halo-family,
and it has an edge between a terminal $t$ and a halo-family $\halo(C)$
if and only if $t\in\Nbr{H(C)}$.
The previous lemma shows that
each halo-family is adjacent to at most $\ell$ terminals in $B$.
Hence, $B$ has at most $\ell q$ edges.
Therefore, $B$ has a terminal that is adjacent to
at most $\frac{\ell q}{\nterms}$ cores; that is,
there exists a terminal with the required thickness.
\end{proof}


\subsection{An $O(k\log^2{k})$-approximation algorithm for
  $\nterms\geq 2k$} 
\label{sec:second-algo}

In this section, we describe our approximation algorithm for the case
of a moderate number of terminals. Recall that we solve the problem by
iteratively increasing the subset connectivity of a graph by one. 
Initially, we apply the algorithm in Section~\ref{sec:reduce-cores} to
decrease the number of core to $O(\ell)$. 
Then we apply inner iterations until all the deficient sets are
covered. 
At the beginning of each inner iteration, we compute the cores
and the halo-sets.
Then we apply a {\em covering-procedure} to find a set of edges that 
covers all the computed halo-families. 
This completes one inner iteration.
Note that an inner iteration may not cover all of the deficient sets 
because deficient sets that contain two or more of the initial cores
(those computed at the start of the inner iteration)
may not be covered.
So, we have to repeatedly apply inner iterations until no core is
present.
%
See Algorithm~\ref{algo:2nd-algo}.


\begin{algorithm}
\caption{An approximation algorithm for moderate and large number of
terminals}
\label{algo:2nd-algo}
\begin{algorithmic}
\For {$\ell=0,1,\ldots,k-1$} \Comment{(outer iterations)}
  \State {\em (* Increase the subset connectivity of a graph by one. *)}
  \State Decrease the number of cores to $O(\ell)$.
  \While {the number of cores is greater than $0$} 
     \Comment{(inner iterations)} 
     \State Compute cores and halo-sets.
     \State Apply a {\bf covering-procedure} to cover all the halo-families
  \EndWhile 
\EndFor
\end{algorithmic}
\end{algorithm}

We now describe the covering-procedure.
The procedure first finds a set of terminals $\setS\subseteq\setT$
that hits all the computed halo-families.
Then it applies the rooted subset $(\ell+1)$-connectivity algorithm 
(Theorem~\ref{thm:rooted-kconn}) from each terminal of $\setS$. 
Let $F$ be the union of all edges found by the rooted subset 
$(\ell+1)$-connectivity algorithm.
Then, by Lemma~\ref{lmm:cover-halo}, $F$ covers all the
halo-families.

The key idea of our algorithm is to pick a terminal $\hr$ with a
minimum thickness. Observe that a halo-family $\halo(C)$ is not hit by 
$\hr$ only if 
\begin{itemize}
\item[(1)] its halo-set $\H(C)$ has $\hr$ as a neighbor (that is,
  $\hr\in \Nbr{H(C)}$) or 
\item[(2)] its halo-set $H(C)$ contains $\hr$, but its core $C$ does
  not contain $\hr$.
\end{itemize}
The number of halo-families $\halo(C)$ such that $\hr\in \Nbr{H(C)}$
may be large, but the number of halo-families whose halo-sets contain 
$\hr$ is $O(1)$, assuming that $\nterms\ge 2\ell$.
Hence, we only hit halo-families of the second case by picking one
terminal from each core $C$ whose halo-set contains $\hr$.
Thus, the number of terminals picked is $O(1)$. 
We call this a {\em micro} iteration.
Then the remaining halo-families are the
halo-families whose halo-sets have $\hr$ as a neighbor.
We repeatedly apply micro iterations until
we hit all of the halo-families computed
at the start of the inner iteration.

To be precise, initially let $\setS=\emptyset$. 
In each micro iteration, we add to $\setS$ a terminal $\hr$ of minimum
thickness (with respect to halo-families that are not hit by $\setS$).  
Then, for each core $C$ such that $\hr\in H(C)\setminus C$,
we add to $\setS$ any terminal in $C\cap\setT$.
We repeatedly apply micro iterations until $\setS$ hits all the
halo-families.  
At the termination, we apply  the rooted subset
$(\ell+1)$-connectivity algorithm  (Theorem~\ref{thm:rooted-kconn})
from each terminal of $\setS$, and we return all the set of edges
found by the algorithm as an output. 
%
The covering-procedure is presented in Figure~\ref{algo:cover-proc}.


\begin{algorithm}
\caption{Covering-procedure}
\label{algo:cover-proc}
\begin{algorithmic}
\State $\setS\leftarrow\emptyset$.
\While {some halo-family is not hit by $\setS$} 
  \Comment{(micro iteration)}
  \State Add to $\setS$ a terminal $\hr$ with a minimum thickness.
  \For {each halo-family $\halo(C)$ (not hit by $\setS$) such
  that $\hr\in H(C)\setminus C$}
    \State  Add to $\setS$ any terminal $r\in C$.
  \EndFor
\EndWhile
\For {each terminals $r$ in $\setS$}
  \State Apply the rooted subset $(\ell+1)$-connectivity algorithm
  from $r$. 
\EndFor
\end{algorithmic}
\end{algorithm}

\subsubsection{Analysis}
\label{sec:analysis}

The feasibility of a solution directly follows from the
condition of the inner iteration; that is, the inner iteration
terminates when a current graph has no core. 
So, at the termination of the inner iteration, the resulting graph has
no deficient set. 
Thus, the subset connectivity of the graph becomes $\ell+1$. 
Applying the outer iteration $k$ times, the final graph is then subset 
$k$-connected.

It remains to analyze the cost of the solution subgraph. 
First, we analyze the number of times that the covering-procedure
applies the rooted subset $(\ell+1)$-connectivity algorithm. 
Then we analyze the total cost incurred by all inner iterations, which
is the cost for increasing the subset connectivity of a graph by one. 
Finally, we apply Theorem~\ref{lmm:LP-scaling} to analyze the final
approximation guarantee.  

Consider any micro iteration of the covering-procedure.
By Lemma~\ref{lmm:bound-1}, $\hr$ is contained in at most 
$O(1)$ halo-sets, assuming that $\nterms\geq 2\ell$. 
Hence, we have to apply the rooted subset $(\ell+1)$-connectivity
algorithm $O(1)$ times.

We now analyze the number of micro iterations
needed to hit all of the halo-families.
Let $h_i$ denote the number of halo-families that are not hit by
$\setS$ at the beginning of the $i$-th micro iteration.
Recall that the number of cores after the preprocessing step is
$O(\ell)$. Thus, $h_1= O(\ell)$.
We claim that, at the $i$-th iteration, the number of halo-families
that are not hit by $\setS$ is at most $h_1/2^{i-1}$.

\begin{lemma}
Consider the $i$-th micro iteration. The number of halo-families that
are not hit by terminals of $\setS$ at the start of the iteration is
$h_1/2^{i-1}$. 
\label{lmm:uncovered-halo}
\end{lemma}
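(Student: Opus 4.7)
The plan is to prove by induction on $i$ that $h_i \le h_1/2^{i-1}$, so the core inductive step is to establish $h_{i+1} \le h_i/2$ for the $i$-th micro iteration.

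First, I would set up a case analysis based on the location of the chosen terminal $\hr$ relative to each halo-family $\halo(C)$ that is not hit by $\setS$ at the start of the iteration. Since $V$ is partitioned as $H(C) \sqcup \Nbr{H(C)} \sqcup \inv{H(C)}$, exactly one of four things holds: $\hr \in \inv{H(C)}$ (in which case $\hr$ hits $\halo(C)$ by definition), $\hr \in C$ (again $\hr$ hits $\halo(C)$), $\hr \in H(C) \setminus C$ (then the procedure adds a terminal of $C \cap \setT$ to $\setS$, which hits $\halo(C)$), or $\hr \in \Nbr{H(C)}$ (in which case $\halo(C)$ remains un-hit after this iteration). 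Hence the number of halo-families still un-hit after the iteration equals the number of previously un-hit halo-families $\halo(C)$ such that $\hr \in \Nbr{H(C)}$, i.e., the ``restricted thickness'' of $\hr$ with respect to the current collection of un-hit halo-families.

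Next, I would invoke Lemma~\ref{lmm:low-thickness} applied not to all halo-families but to the sub-collection of the $h_i$ currently un-hit halo-families. The counting argument in that lemma (every halo-family has at most $\ell$ neighbor-terminals by Lemma~\ref{lmm:halo-nbr}, so the bipartite incidence graph has at most $\ell h_i$ edges, and some terminal is incident to at most $\ell h_i/\nterms$ of them) goes through verbatim on any sub-collection. Since the covering-procedure picks a terminal $\hr$ of minimum (restricted) thickness, we get that $\hr$'s restricted thickness is at most $\ell h_i/\nterms$. Using the standing assumption $\nterms \ge 2\ell$, this is at most $h_i/2$.

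Combining, $h_{i+1} \le h_i/2$, and the induction closes with base case $h_1$. The only subtle point (and the main ``obstacle'' to watch for) is the clean case analysis showing that the only obstruction to hitting $\halo(C)$ after the micro iteration is $\hr \in \Nbr{H(C)}$; this uses crucially the fact that when $\hr \in H(C) \setminus C$ the procedure explicitly inserts a terminal of $C$ into $\setS$, and that by the definition of a core $C \cap \setT \ne \emptyset$ so such a terminal exists.
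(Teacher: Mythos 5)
Your proposal is correct and follows essentially the same route as the paper's proof: choose the minimum-thickness terminal $\hr$, apply the counting argument of Lemma~\ref{lmm:low-thickness} to the sub-collection of currently un-hit halo-families, use $\nterms\ge 2\ell$ to get the factor $1/2$, and close the induction. Your explicit case analysis (via the partition of $V$ into $H(C)$, $\Nbr{H(C)}$ and $\inv{H(C)}$) just spells out the step the paper states in one sentence, namely that after a micro iteration the only un-hit families are those with $\hr\in\Nbr{H(C)}$ (strictly, at most those, which is all the bound needs).
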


\begin{proof}
We proceed by induction on $i$. 
It is trivial for $i=1$. 
Suppose that the assertion is true for the $(i-1)$-th micro iteration
for some $i>1$.
Consider the $(i-1)$-th micro iteration.
Since we choose a vertex $\hr$ with a minimum thickness, by
Lemma~\ref{lmm:low-thickness}, the thickness of $\hr$ is at most
$\frac{h_{i-1}\ell}{\nterms}$.
Note that $\ell/\nterms\leq 1/2$ since $\nterms\geq 2\ell$. 
This means that $\hr$ is a neighbor of at most 
$h_{i-1}/2$ halo-sets.
At the end of the micro iteration,
halo-families that are not hit by terminals of $\setS$ are
halo-families whose halo-sets have $\hr$ as a neighbor.
Thus, the number of remaining halo-families is at most $h_{i-1}/2$. 
Hence, we have   
\[
h_i \leq h_{i-1}\left(\frac{1}{2}\right)
\leq \left(h_1\left(\frac{1}{2}\right)^{i-2}\right)
     \left(\frac{1}{2}\right)
= h_1\left(\frac{1}{2}\right)^{i-1} 
\]
\end{proof}

Lemma~\ref{lmm:uncovered-halo} implies that the maximum number of
micro iterations (within the covering-procedure) is
$O(\log{h_1})=O(\log{\ell})$.
So, in each inner iteration, we have to call the rooted subset
$(\ell+1)$-connectivity algorithm $O(\log{\ell})$ times.

Lastly, we analyze the total cost incurred by all inner iterations,
which is the cost for increasing the subset connectivity of a graph by
one.
We may apply Theorem~\ref{thm:rooted-kconn} directly to analyze the
cost of the solution. However, this leads to a bound slightly weaker
than what we claimed. To get the desired bound, we apply a stronger
version of Nutov's theorem~\cite{Nutov12}. In particular, the
approximation guarantee of Nutov's algorithm depends on the size of a
smallest deficient set. To be precise, {\em the size of a smallest
  deficient set} is defined by
\[
\min\{|U\cap\setT|: \mbox{$U$ is a deficient set}\}.
\]

\begin{lemma}[Nutov 2009~\cite{Nutov12}]
\label{lmm:root-size}
Consider the problem of increasing the rooted subset connectivity of a
graph from $\ell$ to $\ell+1$. 
Let $\phi=\min\{|U\cap\setT|:U \mbox{ is a deficient set}\}$. 
That is, each deficient set of the initial graph contains at least
$\phi$ terminals. 
Then there is an $O(\ell/\phi)$-approximation algorithm. 
\end{lemma}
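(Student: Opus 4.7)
The plan is to invoke Nutov's spider-decomposition framework, refining the $O(\ell)$ bound of Theorem~\ref{thm:rooted-kconn} by exploiting the lower bound $\phi$ on the size of any deficient set. I would first write down the standard biset LP for the rooted $(\ell\!\to\!\ell+1)$-augmentation problem: a variable $x_e\in[0,1]$ for each candidate edge, together with one constraint $x(\deltaout(U))\ge 1$ for every rooted-deficient biset $U$ (that is, $U\cap \setT\neq\emptyset$, $r\notin U\cup\Nbr{U}$, and $|\Nbr{U}|=\ell$).

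The next step is a structural observation specific to the rooted setting: the inclusion-minimal deficient sets (the rooted analogue of a ``core'') are pairwise disjoint. This is immediate from the Uncrossing Lemma~\ref{lmm:uncross}: if two minimal deficient sets $C,D$ intersected in a terminal, then since both have $r$ in their vertex-complement (so $\inv{C}\cap\inv{D}\cap T\ni r$), uncrossing forces $C\cap D$ to be deficient, contradicting the minimality of $C$. Because each core is itself a deficient set and therefore contains at least $\phi$ terminals, the disjointness then gives a bound of at most $|\setT|/\phi$ cores.

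I would then run Nutov's spider-cover procedure on the fractional $\LP$ solution. Each spider extracted from the $\LP$ support simultaneously covers $t$ uncovered cores at cost $O(\LP_{\mathrm{loc}}/t)$ via the standard density/primal-dual argument, where $\LP_{\mathrm{loc}}$ is the $\LP$ mass the spider borrows. A straightforward charging bounds the integer cost by $O(\LP)$ times the number of spider iterations, and the factor $\ell$ enters because achieving $(\ell{+}1)$-connectivity from $r$ requires augmenting against $\ell$ parallel ``layers'' of rooted cuts. Combining this with the bound $|\setT|/\phi$ on the number of distinct cores (and noting that the number of spider rounds is driven by this core count rather than by $|\setT|$) yields the desired $O(\ell/\phi)$-approximation.

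The main obstacle is the final charging: one must rigorously combine the $\ell$ factor from rooted connectivity with the $1/\phi$ saving from the larger core size, without double-counting, so that the bound comes out to $O(\ell/\phi)$ rather than the naive product $O(\ell)$. Concretely, this amounts to reopening Nutov's density estimate and replacing his worst-case count of uncovered cores per round by the sharper $|\setT|/\phi$ estimate enabled by disjointness and the $\phi$-lower-bound; the fact that cores in the rooted setting are truly disjoint (unlike in the subset setting, where only halo-families are disjoint) is what makes this step clean.
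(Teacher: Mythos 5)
First, note that the paper does not prove this lemma at all: it is imported verbatim as a black-box result of Nutov~\cite{Nutov12} (a strengthening of Theorem~\ref{thm:rooted-kconn} in which the ratio for the $\ell\to\ell+1$ rooted augmentation improves to $O(\ell/\phi)$). So what you are attempting is a reproof of an external theorem, and as written it is a plan rather than a proof. The parts you do argue are essentially fine: the uncrossing step (via Lemma~\ref{lmm:uncross}, adapted to the rooted setting where $r$ lies in $\inv{C}\cap\inv{D}$) does show that inclusion-minimal rooted-deficient sets are pairwise terminal-disjoint, and hence, since each contains at least $\phi$ terminals, there are at most $|\setT|/\phi$ of them. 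But this is not where the difficulty lies, and it does not get you to the stated bound.

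The genuine gap is the final quantitative step, which you yourself flag as ``the main obstacle'' and then defer to ``reopening Nutov's density estimate.'' That step is exactly the content of the lemma, so deferring it leaves nothing proved. Moreover, the intermediate bound you actually establish --- at most $|\setT|/\phi$ cores --- does not plausibly yield $O(\ell/\phi)$ by the route you sketch: $|\setT|/\phi$ can be far larger than $\ell/\phi$ (e.g.\ when $|\setT|\gg\ell$), and a spider/density or set-cover style charging over that many cores would naturally produce a logarithmic factor in the number of cores, not a $1/\phi$ improvement multiplying $\ell$. The $\ell$ in Nutov's $O(\ell)$ ratio does not come from ``$\ell$ parallel layers of rooted cuts'' (the augmentation is by a single unit of connectivity); it comes from decomposing the family of deficient bisets into $O(\ell)$ uncrossable subfamilies, each covered with constant ratio, and the refinement to $O(\ell/\phi)$ requires showing that when every deficient set carries at least $\phi$ terminals the decomposition needs only $O(\ell/\phi)$ such subfamilies. (You are also conflating two different tools from Nutov's work: spider-cover decompositions are his device for node-cost problems, whereas the edge-cost rooted subset $k$-connectivity bound rests on the uncrossable-bifamily decomposition.) Without carrying out that decomposition argument, the claimed $O(\ell/\phi)$ bound is asserted, not derived; if you do not wish to redo Nutov's analysis, the honest course --- and the one the paper takes --- is simply to cite the result.
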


Now, we analyze the size of a smallest deficient set of a graph at the 
beginning of each inner iteration. 
Consider the cores at any inner iteration. We call cores at the
beginning of the iteration {\em old cores} and call cores at the end
of the iteration {\em new cores}. We claim that every new core $\hC$
contains at least two old cores $C$ and $D$ that are disjoint on
$\setT$. This follows from the following lemma.

\begin{lemma}
No small deficient set contains two distinct cores $C$ and $D$
such that $C\cap D\cap \setT\neq\emptyset$.
\label{lmm:two-cores}
\end{lemma}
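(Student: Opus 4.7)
The plan is to argue by contradiction, using the Uncrossing Lemma (Lemma~\ref{lmm:uncross}) to produce a small deficient set strictly inside one of the cores, violating its minimality. Suppose for contradiction that $U$ is a small deficient set with two distinct cores $C,D\subseteq U$ satisfying $C\cap D\cap\setT\neq\emptyset$. To invoke the Uncrossing Lemma on $C$ and $D$, I need to verify the second hypothesis, namely $\inv{C}\cap\inv{D}\cap\setT\neq\emptyset$.

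The key containment is $\inv{U}\subseteq\inv{C}\cap\inv{D}$. This is a short set-theoretic check: if $v\in\inv{U}=V-(U\cup\Nbr{U})$, then $v\notin U$ and $v$ has no neighbor in $U$; since $C\subseteq U$, this forces $v\notin C$ and $v\notin\Nbr{C}$, so $v\in\inv{C}$. The same argument gives $v\in\inv{D}$. Because $U$ is itself a deficient set, $\inv{U}\cap\setT\neq\emptyset$, and this set is contained in $\inv{C}\cap\inv{D}\cap\setT$, supplying the missing hypothesis.

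With both conditions verified and $C$ small, the Uncrossing Lemma yields that $C\cap D$ is a small deficient set. Since $C$ and $D$ are two distinct inclusionwise-minimal small deficient sets, neither can be contained in the other, so $C\cap D$ is a proper subset of $C$. But then $C$ contains the smaller small deficient set $C\cap D$, contradicting the minimality property that makes $C$ a core.

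The only delicate point is the containment $\inv{U}\subseteq\inv{C}\cap\inv{D}$, which reduces to the observation that vertex-complements are inclusion-reversing under containment (when the neighborhoods behave monotonically). Beyond that, the proof is a direct application of the Uncrossing Lemma together with the definition of a core, so no further obstacle is anticipated.
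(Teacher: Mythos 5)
Your proof is correct. The key containment $\inv{U}\subseteq\inv{C}\cap\inv{D}$ (valid because $C\subseteq U$ implies $C\cup\Nbr{C}\subseteq U\cup\Nbr{U}$, and likewise for $D$) is exactly the containment the paper uses, and your verification of it is sound. Where you differ is in how the contradiction is extracted. The paper applies the Disjointness Lemma (Lemma~\ref{lmm:sskconn-disjoint}) to the two cores $C$ and $D$ (each lying in its own halo-family) to conclude $\inv{C}\cap\inv{D}\cap\setT=\emptyset$, and then the containment forces $\inv{U}\cap\setT=\emptyset$, contradicting that $U$ is a deficient set. You run the argument in the opposite direction: from $\inv{U}\cap\setT\neq\emptyset$ you deduce $\inv{C}\cap\inv{D}\cap\setT\neq\emptyset$, invoke the Uncrossing Lemma (Lemma~\ref{lmm:uncross}) to conclude that $C\cap D$ is a small deficient set, and contradict the inclusionwise minimality of $C$, noting correctly that $C\cap D\subsetneq C$ because neither of two distinct cores can contain the other. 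Since the Disjointness Lemma is itself derived from the Uncrossing Lemma plus core minimality, your proof is essentially an inlined, contrapositive rearrangement of the paper's argument: it is slightly more self-contained (only the Uncrossing Lemma is needed), whereas the paper reuses the already-established Disjointness Lemma and reaches the contradiction at the deficiency of $U$ rather than at the minimality of $C$. Both routes are equally valid.
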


\begin{proof}
Suppose to the contrary that there is a small deficient set $U$ that
contains two distinct cores $C$ and $D$ such that $C\cap D\cap
\setT\neq\emptyset$.
Since $C\cap D\cap \setT\neq\emptyset$,
by Lemma~\ref{lmm:sskconn-disjoint}, 
$\inv{C}\cap\inv{D}\cap \setT=\emptyset$; 
that is, $\inv{C}\cap\inv{D}$ has no terminals.
Since $U$ contains both $C$ and $D$,
it follows that $\inv{U}$ is contained in both $\inv{C}$ and $\inv{D}$.
Hence, $\inv{U}$ has no terminals.
This contradicts the fact that $U$ is a deficient set.
\end{proof}

Lemma~\ref{lmm:two-cores} implies that no new cores contain two old
cores that are intersecting on $\setT$. This is because new cores
are small deficient sets of the old graph. Moreover, since all small 
deficient sets that contain only one core have been covered, new cores
must contain at least two old cores that are disjoint on $\setT$.
Thus, the size of a smallest deficient set increases by a factor of
$2$. This implies the following lemma. 

\begin{lemma}
Consider the $j$-th inner iteration. At the beginning of the
iteration, the size of a smallest deficient set of the current graph
is at least $2^{j-1}$.
\label{lmm:size-defi}
\end{lemma}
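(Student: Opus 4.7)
The plan is to prove the statement by induction on $j$. For the base case $j=1$, the bound $|U\cap\setT|\geq 1=2^{0}$ holds trivially because every deficient set contains at least one terminal of $\setT$ by definition.

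For the inductive step, I would lean on two structural observations. First, since the algorithm only \emph{adds} edges across iterations, the function $|\Nbr{\cdot}|$ is monotone nondecreasing, so any deficient set at the start of iteration $j$ is also a deficient set of the graph at the start of iteration $j-1$. Second, for any deficient set $U$, the smaller of $U$ and $\inv{U}$ (in $\setT$-size) is small, and every small deficient set contains a core. Hence it suffices to lower bound $|\hC\cap\setT|$ for an arbitrary core $\hC$ of the graph at the start of iteration $j$.

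With this reduction in hand, let $\hC$ be a core at the start of iteration $j$. Then $\hC$ is a small deficient set of the graph at the start of iteration $j-1$, so it must contain at least one old core. If it contained exactly one old core $C$, then by the definition of halo-family we would have $\hC\in\halo(C)$; but the covering-procedure carried out in iteration $j-1$ covered every old halo-family, contradicting the fact that $\hC$ is still deficient in the current graph. Therefore $\hC$ contains two distinct old cores $C$ and $D$. Since $\hC$ is small, Lemma~\ref{lmm:two-cores} forces $C\cap D\cap\setT=\emptyset$, and the inductive hypothesis gives $|C\cap\setT|,\,|D\cap\setT|\geq 2^{j-2}$. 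Combining yields $|\hC\cap\setT|\geq 2\cdot 2^{j-2}=2^{j-1}$, as required.

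The main obstacle is the ``transfer'' step, namely arguing cleanly that a core of the current graph must contain two $\setT$-disjoint old cores. The monotonicity of $|\Nbr{\cdot}|$ under edge additions is what lets me identify the new core with a small deficient set of the previous graph, and the guarantee that the covering-procedure exhausts every old halo-family rules out the ``only one old core inside'' case. Once that is in place, Lemma~\ref{lmm:two-cores} supplies the $\setT$-disjointness, and the doubling is automatic.
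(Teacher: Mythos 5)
Your proof is correct and follows essentially the same route as the paper's: induction on $j$, with the doubling step coming from the fact that every new core must contain two old cores that are $\setT$-disjoint (via Lemma~\ref{lmm:two-cores} and the guarantee that the covering-procedure covers every halo-family computed at the start of the iteration). Your write-up is in fact slightly more explicit than the paper's about the transfer step --- that a core of the current graph is a small deficient set of the previous graph because edges are only added --- but the underlying argument is the same.
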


\begin{proof}
As in the above discussion, the size of a smallest deficient set
increases by a factor of two in each inner iteration. 
In more detail, consider the size of a smallest deficient set at the
beginning and the end of an inner iteration. 
We call the graph at the beginning of the iteration an ``old graph''
and the graph at the end of the iteration a ``new graph''. 
Let $U$ and $U'$ denote smallest deficient sets of the old and the
new graph, respectively.
By the size argument, we conclude that $U$ and $U'$ are cores of the 
old and the new graph.
At the end of the inner iteration, small deficient sets containing
one core are all covered. 
Thus, $U'$ contains two distinct cores $C$ and $C'$ of the old graph. 
Moreover, Lemma~\ref{lmm:two-cores} implies that $C$ and $C'$ have no
terminals in common. 
By the minimality of $U$, we have 
$|C\cap\setT|\geq|U\cap\setT|$ and $|C'\cap\setT|\geq|U\cap\setT|$. 
Thus, $|U'\cap T|\geq |C\cup\setT|+|C'\cap\setT|\geq 2|U\cap T|$ as
claimed. 

Now, we prove the lemma by induction. 
At the first inner iteration, each deficient set contains at least one
terminal.
Thus, the statement holds for the base case.
Assume that the assertion is true for the $(j-1)$-th inner iteration;
that is, at the beginning of the $(j-1)$-th inner iteration, any
deficient set $U$ has at least $2^{j-2}$ terminals.  
By the above claim, this number increases by a factor of two at the
end of the iteration. 
Thus, at the beginning of the $j$-th iteration, the size of a
smallest deficient sets is $2^{j-1}$, proving the lemma.
\end{proof}

By Lemma~\ref{lmm:root-size} and \ref{lmm:size-defi}, at the $j$-th
inner iteration, the cost incurred by the rooted subset
$(\ell+1)$-connectivity algorithm is $O(\ell/2^{j-1})$. 
Combining everything together, the approximation guarantee for the 
problem of increasing the subset connectivity of a graph by one is 
\[
O\left(\frac{\ell}{2^0}\log{\ell} + \frac{\ell}{2^1}\log{\ell} +
  \ldots \right)
= O(\ell\log{\ell}). 
\]

Thus, by Theorem~\ref{lmm:LP-scaling}, our algorithm achieves
an approximation guarantee of $O(k \log^2{k})$, assuming that
$\nterms\ge 2k$. 

\subsection{An $O(k\log{k})$-approximation algorithm for 
  $\nterms\geq k^2$.}
\label{sec:very-simple-algo}

To finish, we show that if the number of terminals is large, then we
get a slightly better performance guarantee. 
Observe that if $\nterms\ge k^2$, then, by
Lemma~\ref{lmm:low-thickness}, there is a terminal $\hr$ with a 
thickness of at most $\frac{q\ell}{\nterms} \leq
\frac{2\ell^2}{\ell^2} = 2$.
Moreover, by Lemma~\ref{lmm:bound-1}, each terminal is contained in at
most $\frac{2\nterms}{\nterms-\ell}=O(1)$ halo-sets.
Thus, the number of halo-families that are not hit by $\hr$ is $O(1)$.
This means that we can hit all the remaining halo-families by
choosing $O(1)$ terminals; that is, for each halo-family, we choose
one terminal from its core.
So, we can skip the micro iterations of the covering-procedure, and  
the approximation guarantee becomes $O(k\log{k})$.

\subsection{Analysis for the case $k<\nterms<2k$} 
\label{sec:below2k}

Our algorithm in Section~\ref{sec:second-algo} indeed applies to the
case $k<\nterms<2k$ with an approximation guarantee of
$O\left(\left(\frac{\nterms}{\nterms-k}\right)^2k\log^2{k}\right)$. 
To see this, we leave the bounds in Lemma~\ref{lmm:bound-1} and
Lemma~\ref{lmm:low-thickness} untouched. 
Then we have 
\begin{itemize} 
\item Each terminal is contained in at most
  $O\left(\frac{\nterms}{\nterms-k}\right)$ halo-families.
\item There is a terminal with a thickness of 
      $O\left(\frac{\ell q}{\nterms}\right)$, 
      where $q$ is the number of halo-families. 
\end{itemize}

Recall the micro iterations of the covering-procedure. 
In each micro iteration, we choose
$O\left(\frac{\nterms}{\nterms-\ell}\right)$ terminals, and 
the number of halo-families (which are not hit) decreases by a
factor of $\frac{\nterms}{\ell}$.
Here the number of micro iterations is not logarithmic
because $\frac{\nterms}{\ell}$ is not constant when
$\nterms\approx\ell$.
To analyze the upper bound, we write $\frac{\ell}{\nterms}$ as 
$1-\frac{1}{\nterms/(\nterms-\ell)}$ and apply an equation:
\[
\lim_{x\rightarrow\infty}\left(1-\frac{1}{x}\right)^x=\frac{1}{e}
\]
Thus, we need $O\left(\frac{\nterms}{\nterms-\ell}\right)$ micro
iterations to decrease the number of halo-families (which are not
hit) by a factor of $e$.
This means that the covering-procedure terminates in
$O\left(\frac{\nterms}{\nterms-\ell}\log{q}\right)$ iterations, where
$q$ is the number of halo-families.
(Note that, in this case, we do not need the preprocessing step because
the number of halo-families is at most $\nterms^2=O(k^2)$.)
So, the covering-procedure has to call the rooted subset
$(\ell+1)$-connectivity algorithm for 
$O\left(\left(\frac{\nterms}{\nterms-\ell}\right)^2\log{k}\right)$
times.
Following the analysis in Section~\ref{sec:analysis}, we have an
approximation guarantee of
$O\left(\left(\frac{\nterms}{\nterms-k}\right)^2k\log^2{k}\right)$ as
claimed. 


\section{Hardness of  the subset $k$-connectivity problem}  
\label{sec:hardness}

In this section, we discuss the hardness of the subset
$k$-connectivity problem.  
First, we will show in Section~\ref{sec:hardest-instance} that the
hardest instance of the subset $k$-connectivity problem might be when
$k\approx |T|$; that is, we prove Proposition~\ref{prop:hardest}. 
Then we will present in Section~\ref{sec:root2subset} an approximation
preserving reduction from the rooted subset $k$-connectivity problem
to the subset $k$-connectivity problem; that is, we prove 
Theorem~\ref{thm:rooted-to-subset}.

\subsection{The hardest instance}
\label{sec:hardest-instance}

We will show that an $\alpha(k)$-approximation algorithm for the case
$\nterms=k$ implies an $(\alpha(k)+f(k))$-approximation algorithm for
all instances with $\nterms>k$, where $f(k)$ is the best known
approximation guarantee for the rooted subset $k$-connectivity
problem.
In particular, instances with $\nterms\approx{k}$ might be the hardest
cases of the subset $k$-connectivity problem.

%
Suppose there is an $\alpha(k)$-approximation algorithm $\calA$ for the
subset $k$-connectivity problem for the case $\nterms=k$. 
We apply $\calA$ to solve an instance of the subset $k$-connectivity
problem with $\nterms>k$ as follows.
Let $G=(V,E)$ be a given graph and $\setT\subseteq V$ be a set of
terminals, where $\nterms>k$.
First, we take any subset $\setR$ of $k$ terminals from $\setT$.
Then we apply the algorithm $\calA$ to this instance with $\setR$ as the
set of terminals; this results in a graph $G_{\setR}=(V,E_{\setR})$. 
Clearly, $\setR$ is $k$-connected in $G_{\setR}$. 
Now, we make the remaining terminals connected to $\setR$ by applying
the rooted subset $k$-connectivity algorithm.
To be precise, we construct a padded graph by adding a new vertex $\hr$
and new edges of zero cost from $\hr$ to each terminal of
$\setR$. Then we apply the rooted subset $k$-connectivity algorithm
to the padded graph with the set of terminals $\setT$ and the root
$\hr$. 
Denote a solution subgraph (of the padded graph) by 
$G_{pad} = (V\cup\{\hr\},\,E_{root}\cup\{(\hr,t):t\in \setR\})$, where 
$E_{root}\subseteq E$.
The algorithm outputs the union of the two subgraphs, namely 
$\hG=(V,E_{\setR}\cup E_{root})$.
 
We claim that the set of all terminals $T$ is $k$-connected in $\hG$.
Suppose not.
Then there is a set of vertices $X\subseteq V$ of size $k-1$ that
separates some terminals $s,t\in\setT\setminus X$; that is, $s$ and
$t$ are not connected in $\hG\setminus X$.
Consider the padded subgraph $G_{pad}$. 
By the construction, since $G_{pad}$ is $k$-connected from $\hr$ to
$\setT$, both $s$ and $t$ have paths to $\hr$ in
$G_{pad}\setminus{X}$. 
Moreover, each of these two paths must visit some terminals $s'$ and
$t'$ in $\setR$,  respectively. 
If $s'=t'$, then $s$ and $t$ are connected by the union of these
paths. So, we have a contradiction. 
If $s'\neq t'$, then we can join these two paths by an $s',t'$-path in
$G_{\setR}\setminus X$. 
Such $s',t'$-path exists because $\setR$ is $k$-connected in
$G_{\setR}$, meaning that $X$ cannot separates a pair of terminals in 
$\setR$.  
Thus, $s$ and $t$ are connected, and we again have a contradiction.

Now, consider the cost. The approximation factor incurred by the
algorithm $\calA$ is $\alpha(k)$, and the approximation factor
incurred by the rooted subset $k$-connectivity algorithm is
$f(k)$. Thus, the above algorithm gives an approximation guarantee
of $(\alpha(k)+f(k))$ as claimed.

\subsection{A reduction from the rooted subset $k$-connectivity problem}
\label{sec:root2subset}

As we showed in the previous section, an approximation algorithm
for the rooted subset $k$-connectivity problem implies an
approximation algorithm for the subset $k$-connectivity problem. 
Hence, it is more likely that the rooted problem is easier than the
subset problem. 
Here we show a solid evidence of this statement; that is, we will give
an approximation preserving reduction from the rooted subset 
$k$-connectivity problem to the subset $k$-connectivity problem. 

The key idea of the reduction is that a solution $\hG$ to the rooted
subset $k$-connectivity problem is indeed almost subset
$k$-connected. 
In particular, if the root vertex $r$ is not allowed to be removed,
then there is no set of vertices of size less than $k$ that can
separate a pair of terminals. 
So, we want to prevent the root vertex $r$ from being in a
separator.
To do this, we replace $r$ by a clique $K_d$ of size at least
$k+1$. 
Thus, removing any set of less than $k$ vertices cannot remove all
vertices corresponding to $r$.

Now, we shall realize the above idea. 
First, take any instance $\Pi^{root}$ of the rooted subset
$k$-connectivity problem consisting of a graph $G=(V,E)$, a set of
terminals $\setT\subseteq V$ and a root vertex $r\in
V\setminus\setT$. 
Let $d$ be the degree of $r$ in $G$.
Clearly, if the instance $\Pi^{root}$ is feasible, then $d\geq k$.  
We construct an instance $\Pi^{subset}$ of the subset $k$-connectivity
problem consisting of a graph $G'=(V',E')$ and a set of terminals
$\setT'$ as follows. 
%
Let $\{v_1,v_2,\ldots,v_d\}$ be a set of neighbors of $r$ in $G$. 
We remove from $G$ the vertex $r$ and replace it with a clique
$K_{d+1}$ on a set of vertices $\{r',v'_1,v'_2,\ldots,v'_d\}$.
All edges of $K_{d+1}$ have zero costs. 
The vertex $r'$ corresponds to the root vertex $r$ of $G$, and each
vertex $v'_i$ corresponds to each neighbor $v_i$ of $r$ in $G$. 
Then we connect $K_{d+1}$ to $G$ by adding to $G'$ an edge
$(v'_i,v_i)$ for each edge $(r,v_i)$ in $G$ and setting the cost of
$(v'_i,v_i)$ to be the same as the cost of $(r,v_i)$. 
Thus, each edge $(v'_i,v_i)$ in $G'$ corresponds to an edge $(r,v_i)$
in $G$. 
The set of terminals of this new instance is $\setT'=\setT\cup\{r'\}$,
and the connectivity requirements is $k$, the same for both instances. 
This completes the construction.

In sum, we have 
\begin{align*}
   G' &= G - \{r\} + K_{d+1} + 
           \{(v'_1,v_1),(v'_2,v_2),\ldots,(v'_d,v_d)\}\\
   T' &= T\cup\{r'\}\\
   r' &\leftrightarrow r\\
   (v'_i,v_i) &\leftrightarrow (r,v_i)\mbox{ for all $i=1,2,\ldots,d$}\\
\end{align*}


\noindent{\bf Completeness:}
First, we show that any feasible solution $H$ of $\Pi^{root}$ maps to 
a feasible solution $H'$ of $\Pi^{subset}$ with the same cost.
The mapping is as follows.
Given a graph $H$, we construct a solution $H'$ to $\Pi^{subset}$ by 
taking all edges of $K_{d+1}$ and all edges of $G'$ corresponding to
edges of $H$.  
Clearly, the cost of $H'$ and $H$ are the same. 
It remains to show that $T'$ is $k$-connected in $H'$. 

The connectivity between the vertex $r'$ and each terminal $t\in T$ is
clearly satisfied. 
This is because any collection of openly disjoint $r,t$-paths in
$H$ maps to a collection of openly disjoint $r',t$-paths in $H'$. 
In particular, any path $P=(r,v_i,\ldots,t)$ in $H$ maps to a path
$P'=(r,v'_i,v_i,\ldots,t)$ in $H'$, and it is easy to see that the
mapping preserves vertex-disjointness. 
By the same argument, we can deduce that every vertex $v'_j\in
K_{d+1}$ is $k$-connected to $t$ in $H'$.
This is because the path $P$ also maps to a path
$P''=(v'_j,v'_i,v_i,\ldots,t)$ or $P''=(v'_j,v_j,\ldots,t)$ in $H'$.  

Now, consider the connectivity between a pair of vertices
$t,t'\in\setT$. 
Assume a contradiction that $t$ and $t'$ are not $k$-connected.
Then there is a subset of vertices $X$ of $G'$ with $|X|\leq k-1$ such
that $t$ and $t'$ are not connected in $H'\setminus{X}$.  
Since $|X| \leq k-1 \leq d$, there is a vertex $s$ in 
$K_{d+1}\setminus{X}$. 
(The vertex $s$ is either the vertex $r'$ or some vertex $v_i$ in
$K_{d+1}$.)
As we have shown, $s$ is $k$-connected to $t$ and $t'$ in $H'$.
Thus, by Menger's theorem, $H'\setminus{X}$ contains both an $s,t$-path
and an $s,t'$-path.
So, $t$ and $t'$ are connected in $H'\setminus{X}$, a
contradiction.  
Therefore, $\setT'$ is $k$-connected in $H'$, implying that $H'$ is
feasible to the subset $k$-connectivity problem. 


\medskip

\noindent{\bf Soundness:} Now, we show the converse; that is, 
any feasible solution $H'$ of $\Pi^{subset}$ maps to a feasible
solution $H$ of $\Pi^{root}$ with the same cost. 
This direction is easy. 
We construct $H'$ by taking all edges of $H'$ that correspond to edges
of $G$. Clearly, the cost of $H$ and $H'$ are the same. 
By feasibility, $H'$ has, for each terminal $t\in T$,  a collection of
$k$ openly disjoint $r',t$-paths, namely $P'_1,P'_2,\ldots,P'_k$, and 
each path $P'_j$ is of the form $P'_j=(r',v'_i,v_i,\ldots,t)$. 
The path $P'_j$ maps to a path $P_j=(r,v_i,\ldots,t)$ in $H$. 
So, we have a collection of paths $P_1,P_2,\ldots,P_k$ in $H$ that are
openly disjoint. 
Therefore, $H$ is feasible to the rooted subset
$k$-connectivity problem, finishing the proof.


\section{Conclusions and Discussions}
\label{sec:discuss}
\label{sec:conclusion}

We studied the structure of the subset $k$-connectivity problem and
used this knowledge to design an approximation algorithm for the
subset $k$-connectivity problem.
When the number of terminals is moderately
large, at least $2k$, our algorithm gives a very good approximation
guarantee of $O(k\log^2k)$. 
When the number of terminals is tiny, at most $\sqrt{k}$, then the
trivial algorithm also gives a very good approximation guarantee of
$O(k)$.
However, when the number of terminals is between $\sqrt{k}$ and $2k$,  
the approximation guarantee can be as large as $\Theta(k^2)$.
Interestingly, as we have shown, it does seem that the hardest
instances of the subset $k$-connectivity problem are when the number
of terminals is close to $k$.

\bigskip

\noindent {\bf Acknowledgments.}
We thank Joseph Cheriyan for useful discussions over a year. 
Also, we thank Adrian Vetta, Parinya Chalermsook, Danupon Nanongkai,
Jittat Fakcharoenphol and anonymous referees for useful comments on
the preliminary draft.


\bibliographystyle{plain}
\bibliography{subset-k-conn}


\end{document}